\newenvironment{proof}{\paragraph{Proof:}}{\hspace*{\fill}\(\Box\)}
\newtheorem{theorem}{Theorem}
\newtheorem{lemma}{Lemma}
\newcommand{\bibdir}{/Users/nitin/bibliography}
\newcommand{\GG}{\mathcal{G}}
\newcommand{\EE}{\mathcal{E}}
\newcommand{\PP}{\mathcal{P}}
\newcommand{\CC}{\mathcal{C}}
\newcommand{\inbound}{\mbox{\em inbound}}
\newcommand{\outbound}{\mbox{\em outbound}}
\newcommand{\scripte}{{\mathcal E}}
\newcommand{\Q}{Q}
\begin{document}

\centerline{\LARGE \bf Efficient Timestamps for Capturing Causality}

~

\begin{center}
Nitin H. Vaidya, University of Illinois at Urbana-Champaign\\
Sandeep S. Kulkarni, Michigan State University
\\~\\
\end{center}

~\\

\begin{center}
{\bf June 17, 2016}\footnote{\small A version of this report, excluding Section \ref{ss:sm}, was submitted for review to a conference
on May 11, 2016.}

~\\

\end{center}

\centerline{\bf Abstract}

Consider an asynchronous system consisting of processes that communicate via
message-passing.
The processes communicate over a potentially {\em incomplete} communication network consisting of reliable bidirectional communication channels.
Thus, not every pair of processes is necessarily able to communicate with each other directly.

The goal of the algorithms discussed in this paper is to assign timestamps
to the events at all the processes such that (a) distinct events are assigned distinct
timestamps, and (b) the happened-before relationship between the events
can be inferred from the timestamps. 
We consider three types of algorithms for assigning timestamps to events:
(i) Online algorithms that must (greedily) assign a timestamp to each event when
the event occurs. (ii) Offline algorithms that assign timestamps to event
after a finite execution is complete. (iii) Inline algorithms that assign a timestamp to each event when it occurs, but may modify some elements of a timestamp again at a later time.

For specific classes of graphs, particularly {\em star} graphs and graphs with connectivity $\geq 1$,
the paper presents bounds on the length of vector timestamps assigned by an {\em online}
algorithm. The paper then presents an {\em inline} algorithm, which typically 
assigns substantially smaller timestamps than
the optimal-length {\em online} vector timestamps. In particular, the inline algorithm assigns timestamp in the form of a tuple containing $2c+2$ integer elements, where $c$ is the size of the vertex cover for the underlying communication graph.

~\\

%
%

\newpage

\section{Introduction}
\label{sec:intro}

Consider an asynchronous system consisting of $n$ processes that communicate via
message-passing.
The processes communicate over a potentially {\em incomplete} network of  reliable bidirectional communication channels.
%
The goal of the algorithms discussed in this paper is to assign timestamps
to the events at all the processes such that (a) distinct events are assigned distinct
timestamps, and (b) the happened-before \cite{lamport78} relationship between the events
can be inferred from the timestamps. 

We will consider three types of algorithms for assigning timestamps to events. To allow us to
compare their behavior, let us introduce a {\em query} abstraction for timestamps. 
For event $e$, we use $e_r$ to denote the abstract real time (which is not available to the processes themselves) when event $e$ occurred. 
The timestamp of event $e$ may be queried at any real time $t$, $t \geq e_r$. 
Depending
on the timestamp algorithm in use, the query may or may not return immediately.
Denote by $\Q^t(e)$ the timestamp that would be returned if a query {\em were to be issued}
at real time $t$ for the timestamp of event $e$. Note that $\Q^t(e)$ is defined even if no query
is actually issued at time $t$. 
Also note that $\Q^t(e)$ is only defined if $t \geq e_r$.
The delay in computing $\Q^t(e)$ depends on the algorithm for assigning timestamps, as seen
below.
Now let us introduce three types of timestamp algorithms:

\begin{itemize}
\item Online algorithms: An online algorithm must (greedily) assign a distinct timestamp to each event
{\em when the event occurs}.
Suppose that $\tau(e)$ is the timestamp assigned to event $e$ by an online algorithm.
The assigned timestamps must be such that,
for any two events $e$ and $f$, $e\rightarrow f$ iff $\tau(e)<\tau(f)$, where 
$<$ is a suitably defined partial order on the timestamps, and $\rightarrow$ is the happened-before
relation \cite{lamport78}. The {\em vector timestamp} algorithm \cite{DBLP:journals/computer/Fidge91,Mattern88virtualtime} is an example of an online algorithm.
For an online algorithm, for any event $e$, $\Q^t(e)=\tau(e)$ for $t\geq e_r$; thus, a query issued at time $t\geq e_r$ can immediately return $\Q^t(e)$.

\item Offline algorithms: An offline algorithm takes an entire (finite) execution
as its input, and assigns a distinct timestamp to each event in the execution.
Similar to online algorithms, the timestamps $\tau(.)$ must be such that,
for any two events $e$ and $f$, $e\rightarrow f$ iff $\tau(e)<\tau(f)$, where 
$<$ is a suitably defined partial order on the timestamps.
There is significant past work on such offline computation of timestamps \cite{DBLP:journals/ipl/Charron-Bost91}.
%
For an offline algorithm, query for the timestamp of any event
will {\em not} return until the entire (finite) execution is complete, and the
offline algorithm has subsequently computed the timestamps for the events.

\item Inline algorithms: Timestamp assigned to an event by an inline algorithm may change as the execution proceeds.
Thus, for an event $e$ it is possible
that $\Q^{t1}(e)\neq \Q^{t2}(e)$ for $t2>t1\geq e_r$.
However, timestamps for distinct events must be always distinct.
That is, for distinct events $e$ and $f$,
$\Q^{t}(e)\neq \Q^{v}(f)$ for any $t \geq max(e_r, f_r)$
We refer to timestamps assigned by inline algorithms as {\em inline timestamps}.
A suitable partial order $<$ is defined on the inline timestamps.
The inline timestamps must satisfy the following property for any two events
$e$ and $f$  and for any $t \geq max(e_r, f_r)$,

\hspace*{1.5in} $\Q^t(e)<\Q^t(f)$ if and only if $e\rightarrow f$.

Thus, the timestamps returned to queries at time $t$
must capture happened-before relation between events that have occurred by that
time; however, $\Q^t(e)$ may not suffice to infer happened-before relation with
some other event $g$ where $g_r > t$.
As an example, suppose that $e\rightarrow g$. Then, it is possible that $\Q^{e_r}(e)\not<\Q^{g_r}(g)$; however, as noted above, it must be true that $\Q^{g_r}(e)<\Q^{g_r}(g)$.

\end{itemize}

The {\em inline} algorithm presented in Section \ref{sec:inline} has close similarities to mechanisms
introduced previously for shared memory \cite{DBLP:journals/tocs/LadinLSG92,zawirski2015swiftcloud}
and message-passing \cite{DBLP:conf/icdcs/GargS02,DBLP:journals/dc/GargSM07}.
We will elaborate on the similarities (and differences) later in Section \ref{sec:related}.
Despite the past work, it appears that
the ideas presented here have some novelty,
as elaborated in Section \ref{sec:related}.

 
\section{System Model and Notation}
\label{sec:notation}

We consider an asynchronous system. The $n$
processes in the system are named $p_i$, $0\leq i<n$.
Processes communicate via reliable bidirectional message-passing channels.
The communication graph for the system includes only undirected edges,
and is denoted by $\GG(\PP,\EE)$.
$\PP = \{p_0,p_1,\cdots,p_{n-1}\}$ denotes the set of vertices, where
vertex $p_i$ represents process $p_i$.
$\EE$ is the set of undirected edges, where the undirected edge between $p_i$ and $p_j$, $p_i \neq p_j$, represents a
bidirectional link.

The events are of three types: send events, receive events,
and computation events.
We consider only {\em unicasts}, thus, each send event results in a message sent to exactly
one process.
However, the proposed algorithm can be easily adapted when multiple messages may be sent at a single send event.

For an event $e$,
{\em proc(e)} denotes the process at which event $e$ takes place.
$\rightarrow$ denotes the happened-before relation between events \cite{lamport78}.
For events $e$ and $f$, when $e\rightarrow f$, we say that ``$f$ happened-after $e$''.
If $e$ is a send event, then $recv(e)$ is the receive event at the recipient process
for the message sent at event $e$.
Recall that for event $e$, $e_r$ denotes the real time at which $e$ occurs.
Different events at the {\em same} process occur at different real times.
That is, if $e\neq f$ and $proc(e)=proc(f)$,  then $e_r\neq f_r$.

For an event $e$, {\em index(e)} denotes the index of event $e$ at process {\em proc(e)}.
For convenience, define $index(\perp)=\infty$.
In Figure \ref{fig:cover}(a), for event $g$, {\em proc(g)} = $p_3$
and {\em index(g)} = 2, because $g$ is the second event at $p_3$.
%
%
For an event $e$ at a process $p_i$, and a process $p_j\neq p_i$, we define events $\outbound(p_j,e)$ and $\inbound(p_j,e)$
as follows:
\begin{itemize}
\item
$\outbound(p_j,e)$ at time $t\geq e_r$ denotes the event at $p_i$ where $p_i$ sends the first message
to $p_j$ at or after event $e$. If $p_i$ has not sent such a message by time $t$, then
$\outbound(p_j,e)=\perp$ at time $t$.
In particular, if $p_i$ sends a message to $p_j$ at event $e$, then $\outbound(p_j,e)=e$;
otherwise, after event $e$, if process $p_i$ sends the first message to $p_j$ at some event $f$ such that $f_r\leq t$,
then $\outbound(p_j,e)=f$ at time $t$.



\item
$\inbound(p_j,e)$ at any time $t\geq e_r$ is defined as follows:
(i) If $\outbound(p_j,e)=\perp$ at time $t$, then $\inbound(p_j,e)=\perp$.
(ii) Else $\inbound(p_j,e)=recv(\outbound(p_j,e))$. 
It is possible that, even when $\outbound(p_j,e)\neq\perp$ at time $t$, 
the receive event $recv(\outbound(p_j,e))$ may not yet be known
-- thus, although
$\inbound(p_j,e)$ may be
well-defined at time $t$, its value (i.e., $recv(\outbound(p_j,e))$) may not be known until later. As we will see later, this affects the design of the {\em inline} algorithm in Section \ref{sec:inline}.
\end{itemize}





Let ${\bf 0}$ denote a vector with all elements being 0; size of the vector is determined
by the context.
Similarly, let $\boldsymbol{\infty}$ denote a vector with all elements being $\infty$.
$V[j]$ denotes element of vector $V$ at index $j$.
Unless stated otherwise, for a vector of length $m$, we index its elements
as 0 through $m-1$.
For vectors $U$ and $V$ of equal size, $\max(U,V)$ is a vector obtained by taking
their element-wise maximum. That is, the $j$-th element
of vector $\max(U,V)$ equals $\max(U[j],V[j])$.

\section{Vector Timestamps with Online Algorithms}
\label{sec:vector}

The proposed {\em inline} algorithm in Section \ref{sec:inline} assigns timestamps
whose size depends on the vertex cover for the communication graph.
A vertex cover of $\GG(\PP,\EE)$ is a subset $\CC$ of $\PP$ such that each edge in $\EE$ is incident on at least one vertex in $\CC$.
In particular, consider
a {\em star} graph in which each process $p_i$, $i\neq 0$, has a link only with process $p_0$; there are no other links in a star graph. $p_0$ is the {\em central} process of the star graph, other processes being {\em radial} processes.
The {\em star} graph has a vertex cover $\{p_0\}$ of size 1, and thus, the proposed {\em inline} algorithm
assigns the smallest timestamps for star graphs. For comparison, we now present some
bounds on timestamps assigned by {\em online} algorithms for some special classes of
graphs, including {\em star} graphs.
Let us define vector timestamps formally \cite{DBLP:journals/ipl/Charron-Bost91}.

\paragraph{Vector timestamps:}
Suppose that a given online algorithm assigns to each event $e$ a timestamp $\tau(e)$
consisting of a vector of a certain fixed size. These timestamps are said to be {\em vector timestamps}
provided that $\tau(e)<\tau(f)$ iff and only if $e\rightarrow f$, where the partial
order $<$ on timestamp vectors (such as $\tau(e)$ and $\tau(f)$) is defined as follows:
{\em
For vectors $U$ and $V$, $U<V$ iff (a) $\forall j$ $U[j]\leq V[j]$, and (b) $\exists i$ such that $U[i]<V[i]$.
}

Vector timestamps are well-studied, and it has been shown that, in general,
the vector length must be at least $n$ in the worst case even if
the timestamps are assigned by an {\em offline} algorithm \cite{DBLP:journals/ipl/Charron-Bost91}.

For {\em online} algorithms, and special classes of graphs, we show the following
bounds on the length
of the vector timestamps necessary to capture causality (i.e., $\tau(e)<\tau(f)$ iff $e\rightarrow f$).
It appears that these bounds have not been obtained previously.
%
%
\begin{itemize}
\item {\em Star graphs:}

\begin{itemize}
\item {\em Real-valued vector elements}:
For $n\geq 3$,
when the vector elements may take any finite real-value,  $n-1$ is the tight bound
for vector timestamp length for the {\em star} communication graph when using an {\em online} algorithm.
Lemma \ref{l_bound} presented at the end of this section proves the lower bound of $n-1$, and 
Appendix \ref{app:upper:star} presents an online algorithm, which constructively
proves that $n-1$ is also an upper bound for $n\geq 3$.

\vspace*{4pt}

For $n=2$, vector length of 2 can be shown to be necessary and sufficient.

\vspace*{2pt}

\item {\em Integer-valued vector elements}:
When the vector elements are constrained to take integer values, $n$ is the tight bound
for vector timestamp length for the {\em star} communication graph when using an {\em online} algorithm.
Lemma \ref{l_bound_integer} in Appendix \ref{app:lower:star} proves the lower bound of $n$.
Upper bound of $n$ is achieved by the standard vector clock algorithm \cite{DBLP:journals/computer/Fidge91,Mattern88virtualtime}.

\end{itemize}

\item {\em Graphs with vertex connectivity = $\kappa$:}
\begin{itemize}
\item Vertex connectivity {\em $\kappa \geq 2$}:
For any communication graph with vector connectivity $\kappa\geq 2$, Lemma \ref{l_bound_arbitrary} in Appendix \ref{sec:bounds:arbitrary}
proves that an online algorithm must use a vector timestamp of length $n$
in the worst case.
Upper bound of $n$ is achieved by the standard vector clock algorithm \cite{DBLP:journals/computer/Fidge91,Mattern88virtualtime}.

\vspace*{2pt}

\item Vertex connectivity {\em $\kappa = 1$}:
For any given communication graph with vertex connectivity of $\kappa=1$, define $X$ to be the set
of processes such that no process in set $X$ by itself forms a vertex cut of size 1.
Then, as shown in Lemma \ref{l_conn1} in Appendix \ref{sec:bounds:arbitrary}, $|X|$ is a lower bound on the vector size used by an online algorithm.
Note that for the star graph, $|X|=n-1$.

\end{itemize}
\end{itemize}

%
%
%
%

While the above results for star graph show that it is not possible to assign
small timestamps using {\em online} algorithm, we presently do not know if
a similar claim is true for  {\em offline} algorithms in {\em star} graphs.
Appendix \ref{a:offline} presents a preliminary result for $n=4$ that suggests 
that further investigation is necessary to resolve the question.

\begin{lemma}
\label{l_bound}
Suppose that an {\em online} algorithm for the star graph assigns distinct {\bf real-valued} vector timestamps
to distinct events such that, for any two events $e$ and $f$, $e\rightarrow f$ if and only if $\tau(e)<\tau(f)$.
Then the vector length must be at least $n-1$.
\end{lemma}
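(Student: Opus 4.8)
The plan is to argue by contradiction: assume an online algorithm uses vectors of length $m \leq n-2$, and construct two executions that the algorithm cannot distinguish, forcing it to violate either distinctness or the happened-before characterization. The core tension to exploit is that an online algorithm must commit to $\tau(e)$ at the moment $e$ occurs, using only information causally available at that point, yet the vector order $<$ is a rigid componentwise order that cannot be "patched" later. Since $n-1$ radial processes all communicate solely through $p_0$, a radial process $p_i$ that performs a local event $e$ before ever hearing from $p_0$ knows nothing about events at the other radial processes; its timestamp $\tau(e)$ is determined entirely by $p_i$'s own local history.

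First I would set up, for each radial process $p_i$ ($1 \le i \le n-1$), an execution in which $p_i$ performs a single local event $e_i$ at the very start, before any messages are sent. Each $\tau(e_i)$ is forced by $p_i$'s solo prefix. Since these $e_i$ are pairwise concurrent (no message has crossed yet), we need $\tau(e_i) \not< \tau(e_j)$ and $\tau(e_j) \not< \tau(e_i)$ for all $i \neq j$, i.e. the $n-1$ vectors form an antichain. Then I would consider a companion family of executions: in execution $k$, after all the $e_i$ occur, $p_0$ receives a message from $p_k$ only, then $p_0$ performs an event $f_k$. Now $e_k \to f_k$ but $e_j \not\to f_k$ for $j \ne k$, so we need $\tau(e_k) < \tau(f_k)$ while $\tau(e_j) \not< \tau(f_k)$ for every $j \ne k$. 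Crucially, $p_0$'s state when it produces $f_k$ depends only on what $p_k$ told it, hence only on $\tau(e_k)$ (plus $p_0$'s own solo history, which is common across all executions).

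The heart of the argument is then a dimension/counting claim on these $m$-dimensional real vectors: for each $k$ there must exist a vector $\tau(f_k)$ dominating $\tau(e_k)$ but not dominating any other $\tau(e_j)$. Having $\tau(f_k) \not\geq \tau(e_j)$ means there is a coordinate $c_{jk}$ with $\tau(f_k)[c_{jk}] < \tau(e_j)[c_{jk}]$; combined with $\tau(f_k) \geq \tau(e_k)$ this pins down, for each ordered pair $(j,k)$, a coordinate in which $e_j$ strictly beats $e_k$. I expect the contradiction to emerge by a pigeonhole argument over the $m \le n-2$ coordinates applied to the $n-1$ events $e_i$: one shows that with too few coordinates, some $\tau(f_k)$ is forced to dominate some unintended $\tau(e_j)$ as well (or that two of the $e_i$ can't be separated in the antichain together with the $f_k$ constraints), violating $e_j \not\to f_k$. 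Making this pigeonhole step precise — choosing the right executions so that the only freedom the algorithm has is in $m$ real coordinates, and showing $m \le n-2$ is genuinely insufficient — is the main obstacle; the real-valued (as opposed to integer) setting means the bound is $n-1$ rather than $n$, so the argument must not accidentally use integrality, and the extremal configuration (the standard $(n-1)$-dimensional clock restricted to radial indices) should be kept in mind as the tight case the proof must respect.
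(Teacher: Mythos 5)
Your setup is the right one --- concurrent first events at the $n-1$ radial processes, timestamps committed online before $p_0$ hears anything, and an adversarial choice of what $p_0$ receives afterwards --- but the family of executions you build on it cannot yield the contradiction, and the step you yourself flag as ``the main obstacle'' is precisely where the proof is missing. In your execution $k$, process $p_0$ receives a message from $p_k$ \emph{only} and then performs $f_k$; the constraints this generates are $\tau(f_k)\geq\tau(e_k)$ and, for each $j\neq k$, some coordinate in which $\tau(e_j)$ strictly exceeds $\tau(f_k)$, hence strictly exceeds $\tau(e_k)$. But that is nothing more than the antichain condition on the $\tau(e_i)$, which is easily satisfiable with real-valued vectors of length $2$ for any number of processes, and the required $\tau(f_k)$ always exists: take $\tau(e_k)$ and increase one coordinate slightly, so that every coordinate in which some $\tau(e_j)$ beats $\tau(e_k)$ still witnesses $\tau(f_k)\not\geq\tau(e_j)$. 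So for $n\geq 4$ (i.e.\ $m=n-2\geq 2$) no pigeonhole over coordinates can emerge from single-receive executions; only the degenerate case $n=3$, $m=1$ dies at the antichain stage.

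The missing idea is to let $p_0$ receive messages from \emph{all radial processes except one}, where the excluded process is chosen after the send timestamps are committed (legitimate, since the online algorithm fixes them at send time, independently of delivery order). With vector length $s\leq n-2$, pick for each coordinate $l$ one radial process attaining $\max_i\tau(e_1^i)[l]$; this selects at most $n-2$ processes, so some radial $p_k$ is left out, and its timestamp satisfies $\tau(e_1^k)\leq E$ where $E$ is the coordinatewise maximum over the selected processes. Now schedule deliveries so that $p_k$'s message arrives last: by the event $e^0_{n-2}$ at which $p_0$ has received every message except $p_k$'s, causality forces $E\leq\tau(e^0_{n-2})$, hence $\tau(e_1^k)\leq\tau(e^0_{n-2})$, and distinctness upgrades this to $\tau(e_1^k)<\tau(e^0_{n-2})$, falsely asserting $e_1^k\rightarrow e^0_{n-2}$ although the two are concurrent. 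This is the step where $s\leq n-2$ is genuinely used (the coordinatewise maximum of $n-1$ vectors is attained by at most $s$ of them), and it needs no cross-execution ``$p_0$'s state depends only on $\tau(e_k)$'' determinism argument --- a single execution with an adversarial delivery order suffices.
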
 
\begin{proof}
The proof is trivial for $n\leq 2$. Now assume that $n\geq 3$.
The proof is by contradiction. Suppose that a give online algorithm assigns vector timestamps of length $s\leq n-2$.

Let $e_q^j$ denote the $q$-th event at process $p_j$.
Consider an execution that includes a send event
$e_1^i$ at radial process $p_i$, $1\leq i\leq n-1$,
where the radial process $p_i$ sends a message to the central process $p_0$.
These $n-1$ send events are concurrent with each other. 
At process $p_0$, there are $n-1$ {\em receive} events corresponding to the above
{\em send} events at the other processes.
The execution contains no other events.

$\tau(e_q^j)$ denotes the vector timestamp of length $s$ assigned to event $e_q^j$ by the online algorithm.
Create a set $S$ of processes as follows:
for each $l$, $0\leq l< s$,
add to $S$ any one radial process $p_j$ such that $\tau(e_1^j)[l] = \max_{1\leq i<n}\, \tau(e_1^i)[l]$.
Note that $\tau(e_1^i)[l]$ is the $l$-th element of vector $\tau(e_1^i)$.
Clearly, $|S|\leq s\leq n-2$.
Consider a radial process $p_k\not\in S$ (note that $p_k\neq p_0$). Such a process
$p_k$ must exist since $|S|\leq n-2$, and there are $n-1$ radial processes.

Suppose that the message sent by process $p_k$ at event $e_1^k$ reaches process $p_0$ after all
the other messages, including messages from all the processes in $S$, reach process $p_0$. That is,
$e_{n-1}^0$ is the receive event for the message sent by process $p_k$. By the time
event $e_{n-1}^0$ occurs, (online) timestamps must have been assigned to all the other events in this
execution.
This scenario is possible because the message delays can be arbitrary, and
 an online algorithm assigns timestamps to the events when they occur.

Now consider event $e_{n-2}^0$. By event $e_{n-2}^i$,
except for the message sent by process $p_k$, all the other messages, including messages
sent by all the processes in $S$, are received by process $p_0$.

Define vector $E$ such that $E[l] = \max_{1\leq i<n}\, \tau(e_1^i)[l]$, $0\leq l< s$.
By definition of $S$, we also have that
$E[l] = \max_{p_i\in S}\, \tau(e_1^i)[l]$, $0\leq l< s$.
The above assumption about the order of message delivery implies that $E \leq \tau(e_{n-2}^0).$
Also, since $p_k\not\in S$, we have that $\tau(e_1^k)\leq E.$
The above two inequalities together imply that $\tau(e_1^k)\leq \tau(e_{n-2}^0)$.

Since $e_1^k\neq e_{n-2}^0$, their timestamps must be distinct too. Therefore,
$\tau(e_1^k)< \tau(e_{n-2}^0)$, which, in turn, implies that
$e_1^k\rightarrow e_{n-2}^0$.
However, $e_1^k$ and $e_{n-2}^0$ are concurrent, leading to a contradiction.
\end{proof}

\section{Inline Algorithm}
\label{sec:inline}

The structure of the inline timestamps presented here has close similarities to comparable objects introduced
in past work, in the context of message-passing \cite{DBLP:conf/icdcs/GargS02,DBLP:journals/dc/GargSM07}
 and causal memory systems \cite{DBLP:journals/tocs/LadinLSG92,zawirski2015swiftcloud}. We discuss
the related work in Section \ref{sec:related}, and also describe the extra flexibility offered by
our approach.

The proposed inline algorithm makes use of a vertex cover for the given communication graph.
Let $\CC$ be the chosen vertex cover. It is assumed that each process knows the cover set $\CC$.
Define $c=|\CC|$.  Without loss of generality, suppose that the processes
are named such that $\CC=\{p_0,p_1,\cdots,p_{c-1}\}$.

The algorithm assigns a timestamp $\tau(e)$ to each event $e$.
The timestamp for an event at each process $\CC$ consists of just a vector of size $c$.
On the other hand, the timestamp for an event at a process outside $\CC$ includes other
components as well. We refer to the vector component in a timestamp $\tau(e)$ as $\tau(e).vect$.
The other components of the timestamp assigned to an event outside $\CC$ are $id$,
$index$ and $next$ (elaborated below).

The algorithm assigns an initial timestamp to each event $e$ when the event occurs.
The {\em vect} field of a timestamp is not changed subsequently.
Similarly, the {\em index} field, present only in timestamps of events outside $\CC$,  is also not changed subsequently.
The $next$ field of the timestamp, assigned only to an event outside $\CC$, however, may be updated as the execution progresses beyond the event
(as elaborated below).
Since the timestamps for events in $\CC$ only include the $vect$ field, it follows that the
once a timestamp is assigned to an event in $\CC$, it is never modified.


\paragraph{Intuition behind inline timestamps:} 
The inline algorithm exploits the fact that
at least one endpoint of each communication channel must be at a process in $\CC$.
In particular, for an event $e$ that occurs at a process outside $\CC$, the algorithm identifies the {\em most recent}
event in $\CC$, say $f$, such that $f\rightarrow e$. Similarly, for an event $e$ that occurs outside $\CC$,
the algorithm identifies the {\em earliest} event at each $p_j\in\CC$, say event $f_j$, that happened-after $e$
and is influenced {\bf directly} by the process where $e$ occurs. Here ``influence directly'' means that
the process $proc(e)$ sends a message to $p_j$. Indices of these events are used to form the {\em inline}
timestamp of event $e$. Since $proc(e)$ may ``directly influence'' different processes at different times, the corresponding
components of the timestamp are updated accordingly when necessary. 

For events at processes in $\CC$, the inline algorithm uses the standard vector clock algorithm \cite{DBLP:journals/computer/Fidge91,Mattern88virtualtime}, with the
vector elements restricted to processes in $\CC$.
In particular,
for an event $e$ at $p_i\in\CC$, $\tau(e).vect$ is a vector of length $c$, and
with the following properties:
\begin{itemize}
\item If $e$ is the $k$-th event at $p_i$, then $\tau(e).vect[i]=k$.
\item For $p_j\in\CC$ where $p_j\neq p_i$, $\tau(e).vect[j]$ is the number of events at $p_j$ that happened-before $e$.
\end{itemize}

\subsection{Inline Algorithm Pseudo-Code}

Each process $p_i$ maintains a local vector clock $clock_i$ of size $c$. Initially,
$clock_i := {\bf 0}$.
Consider a new event $e$ at process $p_i$.
We now describe how the various fields of the timestamp are computed:

\begin{itemize}
\item If $p_i\not\in\CC$ then $\tau(e).id:=p_i$ and $\tau(e).index:=index(e)$.
\item {$vect$ field:}
\begin{itemize}

\item If $p_i\in \CC$, then $clock_i[i] := clock_i[i] + 1$.

\item If $e$ is a send event, then piggyback
the following on the message sent at event $e$: (i) vector $clock_i$,
and (ii) if $p_i\not\in\CC$ then $\tau(e).index$.

\item If $e$ is a receive event, then let
$v$ be the vector piggybacked with the received message, and update $clock_i := \max(clock_i,v).$

\item
$\tau(e).vect := clock_i$.
\end{itemize}

\item $next$ field: If $p_i\in\CC$, computation of $next$ is not performed.

The steps performed when $p_i\not\in\CC$ depend on the type of the event, as follows:
\begin{enumerate}

%
%
%
%
%
%

\item $\tau(e).next:= \boldsymbol{\infty}$.

\item If $e$ is a send event for message\footnote{Because $p_i\not\in\CC$
and $\CC$ is a vertex cover,
any message from $p_i$ must be sent to a process in $\CC$.}
memor $m$ destined for some process $p_j\neq p_i$
then define an event set $N_e$ as follows: 
\[ N_e ~=~\{e\} \cup \{ f~|~\mbox{$proc(f)=p_i$ and $f\rightarrow e$ and~}\tau(f).next[j]=\infty\}\]
\item 
When $index(\inbound(p_j,e))$ becomes known to $p_i$,
for each $g\in N_e$, $$\tau(g).next[j] := index(\inbound(p_j,e))$$

The discussion of how $p_i$ learns $index(\inbound(p_j,e))$ is included with the
discussion of the query procedure in Section \ref{sec:query}. 

\end{enumerate}
\end{itemize}
Observe that the algorithm essentially assigns
vector timestamps to events in $\CC$, with vector elements restricted to the
processes in $\CC$. The $next$ field for events outside $\CC$ may change
over time, as per steps 2 and 3 above. 
\subsection{Response to a Query for Timestamps}
\label{sec:query}

Consider any event $e$ that occurs at time $e_r$ at some process $p_i$. If by
some time $v\geq e_r$, the event $\outbound(p_j, e)$ has occurred already, but $\tau(e).next[j]=\infty$,
then the $next[j]$ field of timestamp $Q^v(e)$ of $e$ cannot yet be determined  (refer to Step 3 of the algorithm above).
Hence, the query for $Q^v(e)$ is delayed until this information becomes available to $p_i$.
To allow $p_i$ to learn the index of the receive event for the message it sent to $p_j$ at event $\outbound(p_j, e)$,
process $p_j$ can send a control message to $p_i$ carrying
the index of its receive event, as well as the index of the corresponding send
event at $p_i$ (the index of the send event is piggybacked on the application message,
as specified in the pseudo-code above). 
Dashed arrows in Figure \ref{fig:cover} illustrate
such control messages. In particular, the last control message in Figure \ref{fig:cover}(b) carries index 5 of the receive event at $p_1$ and index 4 of the
corresponding send event at $p_3$. Section \ref{sec:example} elaborates on the example in Figure \ref{fig:cover} 

The overhead of the above control messages can potentially be mitigated by judiciously piggybacking control information on application messages. Alternatively,
the control information can be ``pulled'' only when needed. In particular,
when a query for timestamp of some event $e$ is performed at $p_i$ at time $v$,
$p_i$ can send a control message to the processes in $\CC$ to learn any
event index information that may be necessary to return $Q^v(e)$. 

To summarize, the response to a query for timestamps
of event $e$ at process $p_i\not\in\CC$ at time $v\geq e_r$ is handled as follows:

\begin{itemize}
\item While ($\exists p_j\in\CC$ such that $\outbound(p_j,e)\neq \perp$, and $\tau(e).next[j]=\infty$) wait.
\item Return $\tau(e)$ as $Q^v(e)$.
\end{itemize}

\subsection{Example of Inline Timestamps and Query Procedure}
\label{sec:example}

\begin{figure}[ht]
\centering
\centerline{\includegraphics[width = 2.2in]{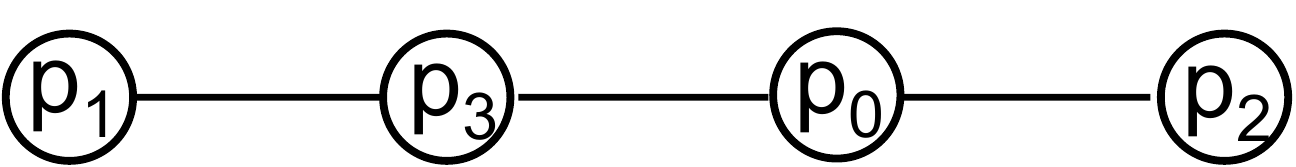}}
\caption{\em An example communication graph}
\label{fig:topology}
\end{figure}

Consider the communication network in Figure \ref{fig:topology}. For this network, let us choose $\CC=\{p_0,p_1\}$.
Thus, the timestamps for events at processes in $\CC$ (i.e., $p_0$ and $p_1$)
consist of a vector of length 2. 
Figure \ref{fig:cover}(a) shows all the events that have
taken place in a certain execution by time $t$. In this execution,
the initial timestamp $Q^{e_r}(e)$
for event $e$ at $p_0$ is (3,1) because it is the third event at process $p_0$,
and only one event at $p_1$ happened-before event $e$ (this dependence arises
due to messages exchanged by $p_0$ and $p_1$ with process $p_3\not\in\CC$).
The timestamp for an event in $\CC$ does not change after the initial assignment. Thus, $Q^{e_r}(e)=Q^t(e)$ for any $t > e_r$. 
The solid arrows in the figure depict application messages, whereas
the dashed arrow depicts a control message, to be explained later.

\begin{figure}[ht]
\centering
{\includegraphics[height = 2.1in]{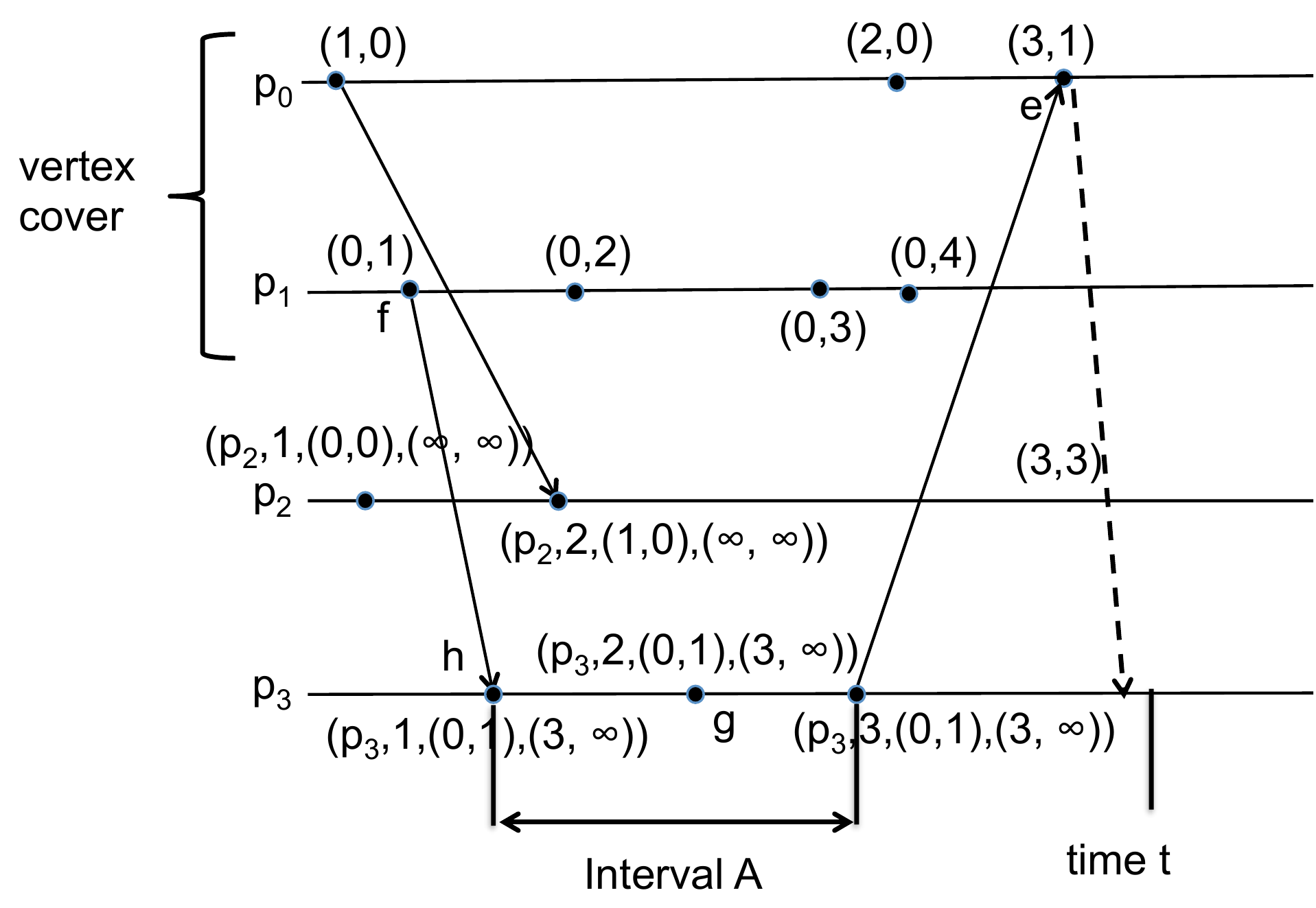}}
{\includegraphics[height = 2.1in]{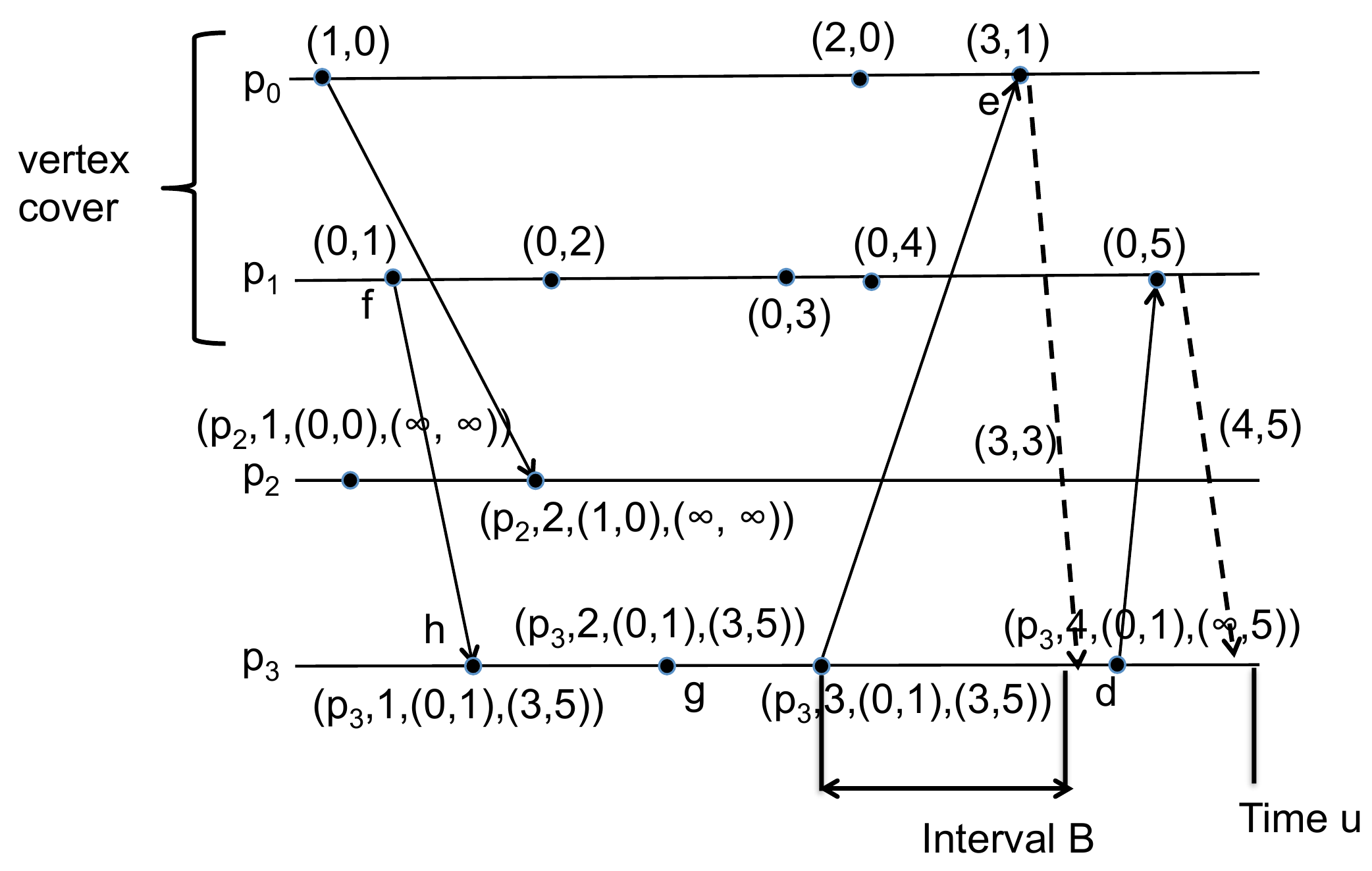}}

(a) \hspace*{2.8in}(b)
\caption{\em (a) An execution showing all the events that have taken place by time $t$. For each event $x$ in the figure, timestamp $Q^t(x)$ returned for a query issued at $t$ is also shown.
A solid arrow depicts an application messages, whereas
dashed arrow depicts a control message.
(b) Extended execution 
showing events that have taken place by time $u$, and timestamp $Q^u(x)$ for each event $x$.  }
\label{fig:cover}
\end{figure}

In Figure \ref{fig:cover}(a), the timestamp $Q^{h_r}(h)$
of event $h$ when it occurs at $p_3$ at time $h_r$
is $(p_3,1,(0,1),(\infty,\infty))$ -- this timestamp is not depicted in the figure.
However, at time $t$, as depicted in Figure \ref{fig:cover}(a),
$Q^t(h)= (p_3,1,(0,1),(3,\infty))$. The $index$ in the timestamp is 1 because $h$ is the
first event at $p_3$. $vect=(0,1)$ in the timestamp because
no event at $p_0$ and 1 event at $p_1$ happened-before event $h$.
Observe that $vect$ field of $Q^{h_r}(h)$ and $Q^t(h)$ is identical.
In fact, except the $next$ field, the other fields of the timestamps
do not change after their initial assignment.
Both the elements of $next$ in $Q^{h_r}(h)=(p_3,1,(0,1),(\infty,\infty))$ are $\infty$, because
$h$ is not a send event. Subsequently, if and when $p_3$ sends messages to processes in $\CC$,
corresponding elements of $next$ are updated. For instance,
the $next[0]$ element of $Q^t(h)$ is $3$ because 3 is the index of the event $e$ at $p_0$
at which $p_0$ receives a message from $p_3$ that was sent at time $\geq h_r$ and $\leq t$.
$next[1]$ in timestamp $Q^t(h)$ is $\infty$ because process $p_3$ does not
send a message to process $p_1$ at any time between $h_r$ and $t$.
Observe that the timestamp $Q^t(g)$ for event $g$ at $p_3$ differs from $Q^t(h)$
only in its $index$: the $vect$ and $next$ are identical for events $h$ and $g$.
The dashed arrow in Figure \ref{fig:cover}(a) depicts a {\em control message}
that allows process $p_3$ to learn
the index of the event at $p_0$ where $p_0$ received a message from $p_3$.
Process $p_3$ can determine, on receipt of the control message in Figure \ref{fig:cover}(a),
that $Q^t(h).next[0]=3$.

For any $j$, once $next[j]$ is assigned a finite value, the field $next[j]$ is not modified
again. For instance, in the above example, because $Q^t(h).next[0]=3$,
it follows that $Q^u(h).next[0]=3$ for any $u>t$ as well. 
However, since $Q^t(h).next[1]$ is $\infty$,
if at a later time, process $p_3$ were to send a message to $p_1$,
$next[1]$ is updated appropriately.
For instance, Figure \ref{fig:cover}(b) shows an extended version
of the execution in Figure \ref{fig:cover}(a)  that shows all the events
that occur by some time $u>t$.
Also, Figure \ref{fig:cover}(b) shows the timestamp $Q^u(x)$ corresponding to query at time $u$ for each event $x$ shown in the figure.
Now observe that $next[1]$ for event $h$ and $g$ are both changed to 5,
because the message sent by process $p_3$ to $p_1$ is received at the $5^{th}$ event at $p_1$.
For event $d$ at $p_3$, while $next[1]$ equals 5 (similar to $next[1]$ for event $g$),
$next[0]$ for event $d$ is presently $\infty$, since process $p_3$ is yet to send a message
to $p_0$ (i.e., at or after event $d$).

\paragraph{Delay in responding to some queries:}
In Figure \ref{fig:cover}(a), at any time $v$ during interval A, 
$Q^v(h)$ will be returned as $(p_3,1,(0,1)(\infty,\infty))$ because
process $p_3$ is yet to send any message after event $h$.
In Figure \ref{fig:cover}(a), the dotted arrow is a control message that carries
the index of event $e$ -- on receipt of this message, $p_3$ learns
$\inbound(p_0,h)$.
In Figure \ref{fig:cover}(b), query issued anytime during interval B will have 
to wait until $p_3$ learns $\inbound(p_0,h)$.
On the other hand, query at time $u$ in Figure \ref{fig:cover}(b) will return $Q^u(h)=(p_3,1,(0,1),(3,5))$.

\subsection{Inferring Happened-Before ($\rightarrow$) from the Inline Timestamps}
\label{sec:infer}

Recall that timestamps for events at processes in $\CC$ do not include
an $id$ field, whereas timestamps for events at 
processes in $\CC$ do include an $id$ field.
In the following, we
use the convention that, if $\tau_1$ is the timestamp for an event at
a process in $\CC$, then $\tau_1.id=\perp$.
On the other hand,
for timestamps of events at processes outside $\CC$, $id\neq\perp$.

For the inline timestamps defined above, we define the $<$ relation as follows.
Consider two inline timestamps $\tau_1$ and $\tau_2$.
$\tau_1 < \tau_2$ if and only if one of the following is true:
\begin{list}{}{}
\item
\hspace*{0.25in} (i)~ $\tau_1.id =\tau_2.id \neq \perp$ and $\tau_1.index < \tau_2.index$, or

\hspace*{0.25in} (ii) $\tau_1.id=\tau_2.id=\perp$ and $\tau_1.vect < \tau_2.vect$, or

\hspace*{0.25in} (iii) $\tau_1.id=\perp$, $\tau_2.id\neq\perp$, and $\tau_1.vect \leq \tau_2.vect$, or

\hspace*{0.25in} (iv) $\tau_1.id\neq\perp$, $\tau_1.id\neq\tau_2.id$, and 
$\exists i,~0\leq i<c$, such that $(\tau_1.next[i] \leq \tau_2.vect[i])$.
\end{list}

The four cases above cover all possibilities. In particular,
in case (i), the two events are at the same process outside $\CC$.
In case (ii), the two event are at processes (not necessarily identical) in $\CC$.
In case (iii), $\tau_1$ is timestamp of an event at a process in $\CC$,
whereas $\tau_2$ corresponds to an event outside $\CC$.
Finally, in case (iv), timestamp $\tau_1$ corresponds to an event at a process outside $\CC$,
whereas the event corresponding to $\tau_2$ may be at any other process (in or outside $\CC$).

With the above definition $<$,
the theorem below states that the inline algorithm satisfies the requirement
that the timestamps be useful in inferring causality.
\begin{theorem}
\label{thm1}
For any two events $e$ and $f$, and for $t \geq max(e_r, f_r)$,
$Q^t(e)$ and $Q^t(f)$ are the timestamps returned by the query procedure when
using the proposed inline algorithm. The following condition holds:\\
\hspace*{1in}$e\rightarrow f$~~ if and only if ~~ $Q^t(e)<Q^t(f)$,\\
where partial order $<$ for inline timestamps is as defined above.
\end{theorem}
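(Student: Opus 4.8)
The plan is to prove the biconditional by handling the two directions separately, and within each direction, to split into cases according to which of the four clauses (i)–(iv) in the definition of $<$ is relevant — equivalently, according to whether $\mathrm{proc}(e)$ and $\mathrm{proc}(f)$ lie inside or outside the cover $\CC$. First I would record a few invariants of the algorithm that will be used repeatedly: (a) for an event $x$ at $p_i\in\CC$, $\tau(x).vect$ is exactly the standard vector clock restricted to $\CC$, so for $x,y$ at processes in $\CC$, $x\rightarrow y$ iff $\tau(x).vect<\tau(y).vect$ (this is the classical vector-clock correctness, which I would cite); (b) for an event $x$ at $p_i\notin\CC$, $\tau(x).vect$ equals $\tau(f).vect$ where $f$ is the most recent event in $\CC$ with $f\rightarrow x$ (and $\tau(x).vect=\mathbf 0$ if no such $f$ exists), because $clock_i$ is only updated at receive events and messages into $p_i$ come only from $\CC$; (c) for an event $x$ at $p_i\notin\CC$ and $p_j\in\CC$, once $\outbound(p_j,x)$ has occurred, $\tau(x).next[j]=index(\inbound(p_j,x))$, i.e. the index at $p_j$ of the receive event for the first message $p_i$ sends to $p_j$ at or after $x$; and the query procedure guarantees $Q^t(x)$ is only returned once all such finite $next[j]$ values are in place. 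Fact (c) is the crucial structural fact linking the $next$ field to the causal structure, and I would state it as a lemma.

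For the "if" direction ($Q^t(e)<Q^t(f)\Rightarrow e\rightarrow f$), I would go clause by clause. Clause (i): same process outside $\CC$, smaller index — then $e$ precedes $f$ in program order, so $e\rightarrow f$. Clause (ii): both in $\CC$, $\tau(e).vect<\tau(f).vect$ — apply invariant (a). Clause (iii): $e$ in $\CC$, $f$ outside, $\tau(e).vect\le\tau(f).vect$ — by invariant (b), $\tau(f).vect=\tau(g).vect$ for the most recent $g\in\CC$ with $g\rightarrow f$ (such $g$ exists since $\tau(f).vect\ge\tau(e).vect\neq\mathbf 0$ because $\tau(e).vect[i]\ge 1$ where $p_i=\mathrm{proc}(e)$); then $\tau(e).vect\le\tau(g).vect$ gives $e\rightarrow g$ or $e=g$ by (a), and $g\rightarrow f$ gives $e\rightarrow f$. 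Clause (iv): $e$ outside $\CC$ at $p_i$, and there is $i$ with $\tau(e).next[i]\le\tau(f).vect[i]$; here $\tau(e).next[i]$ being finite means $\outbound(p_i,e)$ occurred, its receipt $\inbound(p_i,e)$ at $p_{i}\in\CC$ (index $=\tau(e).next[i]$... careful: $next[i]$ indexes events at $p_i\in\CC$) satisfies $e\rightarrow\inbound(p_i,e)$, and $\tau(e).next[i]\le\tau(f).vect[i]$ says the number of events at that cover process known to $f$ is at least that index, so $\inbound(p_i,e)\rightarrow f$ or they coincide (using (a)/(b) according to whether $f$ is in or out of $\CC$) — hence $e\rightarrow f$.

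For the "only if" direction ($e\rightarrow f\Rightarrow Q^t(e)<Q^t(f)$), I would again split on the locations of $e$ and $f$, and use that $e\rightarrow f$ decomposes along the message chain. If both are in $\CC$: (a) gives clause (ii). If $e\in\CC$, $f\notin\CC$: some event $g\in\CC$ on the chain from $e$ to $f$ with $g$ the last cover event before reaching $f$ satisfies $e\rightarrow g$ or $e=g$ and $\tau(g).vect=\tau(f).vect$, giving $\tau(e).vect\le\tau(f).vect$ and clause (iii). If $e\notin\CC$ and $f$ is at the same process with larger index: clause (i). The main work is $e\notin\CC$ at $p_i$ and $f$ at a different process: the causal chain $e\rightarrow f$ must leave $p_i$ via some message, i.e. $p_i$ sends a message at some event $e'$ with $e\rightarrow e'$ (or $e=e'$) and $e'$ is a send to some $p_j\in\CC$; taking the first such send at or after $e$, we get $\outbound(p_j,e)=e'$, and $\inbound(p_j,e)=recv(e')\rightarrow f$ or $=$ an event $\preceq$ on the chain, so the index of $\inbound(p_j,e)$ at $p_j$ is $\le\tau(f).vect[j]$ (again via (a)/(b)); since $t\ge f_r$ and the query for $Q^t(e)$ was answered, $\tau(e).next[j]$ is finite and equals that index, giving clause (iv). I would note the subtle point that $\outbound(p_j,e)$ need not be on the particular chain witnessing $e\rightarrow f$, but it is an earlier send than whatever send is on the chain, and $p_j$ receiving $p_i$'s messages accumulates indices monotonically, so the inequality still holds — and also that $\outbound(p_j,e)$ has occurred by time $t$ since the chain-witnessing send has, so the query does not block past having this information.

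The step I expect to be the main obstacle is clause (iv) in both directions, and specifically the bookkeeping in the "only if" direction showing $\tau(e).next[j]\le\tau(f).vect[j]$: one must carefully argue that the relevant receive event $\inbound(p_j,e)$ is causally at or before $f$ even though it is defined via the \emph{first} message from $p_i$ to $p_j$ after $e$ rather than via the message actually lying on the chain from $e$ to $f$, and that by time $t\ge f_r$ this $next[j]$ entry is finite so $Q^t(e)$ reflects it — this is where the query-delay semantics and the monotonicity of $p_j$'s vector clock both get used. A secondary subtlety is confirming the four clauses are genuinely exhaustive and mutually consistent (e.g. when both $e,f$ are outside $\CC$ at the same process, only clause (i) can apply and it must, which needs the observation that distinct events get distinct timestamps via the $id$/$index$ fields). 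I would also need to check the degenerate sub-cases where $\tau(e).vect=\mathbf 0$ or where $e$ or $f$ has sent no messages, to make sure the "most recent cover event" arguments still go through.
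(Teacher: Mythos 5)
Your proposal is correct and follows essentially the same route as the paper's own proof: a case analysis on whether $\mathrm{proc}(e)$ and $\mathrm{proc}(f)$ lie in or outside $\CC$, using the standard vector-clock correctness within $\CC$, the fact that the $vect$ field of an event outside $\CC$ reflects the most recent cover events that happened-before it, and the $next$-field semantics to settle clause (iv); organizing by direction first and by case second is only a cosmetic difference. If anything, you are more explicit than the paper about the two subtleties it glosses over -- that $\outbound(p_j,e)$ need not be the send lying on the causal chain witnessing $e\rightarrow f$, and that the query-delay semantics guarantee the relevant $next[j]$ entry is finite in the returned $Q^t(e)$ -- so no gap relative to the paper's argument.
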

Appendix \ref{sec:proof} presents the proof of this theorem.
Appendix \ref{sec:app} discusses some implementation issues related to the 
inline algorithm.

\section{Related Work}
\label{sec:related}


%

The concept of vector clock or vector timestamp 
was introduced by Mattern \cite{Mattern88virtualtime} and
Fidge \cite{DBLP:journals/computer/Fidge91}.
Charron-Bost \cite{DBLP:journals/ipl/Charron-Bost91} showed
that there exist communication patterns
that require vector timestamp length equal to the number of processes.
Schwarz and Mattern \cite{DBLP:journals/dc/SchwarzM94} provided a relationship between
the size of the vector timestamps and the dimension of the partial order specified by 
happened-before.
Garg et al. \cite{DBLP:conf/podc/GargS01} also demonstrated analogous
bounds on the size of vector timestamps using the notion of event {\em chains}.
Singhal and Kshemkalyani \cite{DBLP:journals/ipl/SinghalK92} proposed a strategy for reducing the
communication overhead of maintaining vector timestamps.
Shen et al. \cite{DBLP:conf/ispdc/ShenKK13} encode of a vector clock of length $n$ using
a single integer that has powers of $n$ distinct prime numbers as factors. 
Torres-Rojas and Ahmad propose constant size logical clocks
that trade-off clock size with the accuracy with which happened-before relation
is captured \cite{Torres-Rojas:1999:PCC:1035766.1035768}.
Meldal et al. \cite{DBLP:conf/podc/MeldalSV91} propose
a scheme that helps determine causality between two messages {\em sent to the same process}.
They observe that, because their timestamps do not need to capture the happened-before relation between {\em all events}, their timestamps can be smaller. Some of the algorithms presented by Meldal et al. \cite{DBLP:conf/podc/MeldalSV91} exploit information about the communication graph, particularly information about the paths over which messages may be propagated.

\paragraph{Closely related work:}
Closest to our work is a timestamp algorithm for {\em synchronous messages} by
Garg et al. \cite{DBLP:conf/icdcs/GargS02,DBLP:journals/dc/GargSM07},
timestamps used in causal memory implementations, particularly,
{\em Lazy Replication} \cite{DBLP:journals/tocs/LadinLSG92}
and {\em SwiftCloud} \cite{zawirski2015swiftcloud}, and a hierarchical cluster timestamping
scheme \cite{Ward:2001:SHC:646666.699445}.
We will discuss these prior schemes next.

\paragraph{Synchronous messages:}
For synchronous messages, the sender process, after sending a message,
must {\em wait} until it receives an acknowledgement from the receiver process.
This constraint is exploited in \cite{DBLP:conf/icdcs/GargS02,DBLP:journals/dc/GargSM07} to design small timestamps.
In particular, if the communication network formed by the processes
is decomposed into, say, $d$ components that
are either {\em triangles} or {\em stars}, then the timestamps contain $d+4$ integer elements. Although our timestamps have similarities to the structure used
in
\cite{DBLP:conf/icdcs/GargS02,DBLP:journals/dc/GargSM07},
our algorithm does {\em not} constrain the messages to be synchronous.
As a trade-off, our timestamps are somewhat larger than \cite{DBLP:conf/icdcs/GargS02,DBLP:journals/dc/GargSM07}.
In \cite{DBLP:conf/icdcs/GargS02,DBLP:journals/dc/GargSM07}, a sender process cannot take any additional steps until it receives an acknowledgement for a sent message. We do not impose this constraint. In particular, the delay in receiving the control messages in our case only delays response to timestamp queries, but
not necessarily the computation at the processes.

\paragraph{Causal memory \cite{DBLP:journals/tocs/LadinLSG92,zawirski2015swiftcloud}:}
While there are close similarities between our work and timestamps maintained by causal memory schemes \cite{DBLP:journals/tocs/LadinLSG92,zawirski2015swiftcloud}, one critical difference is that our work focuses message passing whereas 
\cite{DBLP:journals/tocs/LadinLSG92,zawirski2015swiftcloud} focuses on shared memory. 

In {\em Lazy Replication} \cite{DBLP:journals/tocs/LadinLSG92},
each client sends its updates and queries to one of the servers. A server that receives
an update from one of the clients then propagates the update to the other servers.
Each server maintains a vector clock, similar to the timestamps at processes in our cover $\CC$: the $i$-th entry
of the vector at the $j$-server essentially counts the number of updates propagated to the $j$-th server
by the $i$-th server. In essence, the servers are {\em fully connected}, whereas our cover
$\CC$ need not be. Each client also maintains a vector similar to $vect$ in timestamps for
processes $\not\in\CC$ in our case.
Additionally, when a client sends its update to the $j$-th server,
the $j$-th element of the client's vector is updated to the index of the client's update at the $j$-th server. 
The client may potentially send the same update to multiple servers, say, $j$-th and $k$-th servers; in this case,
the $j$-th and $k$-th elements of the client's vector will be updated to the indices
of the client's updates at the respective servers.
The way the timestamps are compared in Lazy Replication differs slightly from the partial order
defined on inline timestamps, because our goal is
to capture causality {\em exactly}, whereas in Lazy Replication an approximation suffices -- this is
elaborated in Appendix \ref{a:related}.

The mechanism used in {\em SwiftCloud} \cite{zawirski2015swiftcloud} is motivated by {\em Lazy Replication} \cite{DBLP:journals/tocs/LadinLSG92},
and has close similarities to the vectors in \cite{DBLP:journals/tocs/LadinLSG92}.
In {\em SwiftCloud}, if a client sends its update to multiple servers, then the indices
returned by the servers are {\em merged} into the dependency vector maintained by the client
(optionally, some of the returned indices may not be merged).
Importantly, a server can only respond to future requests from the client provided that the server's vector
covers the client's dependency vector. This has similarities to the manner in which we compare
inline timestamps. 

The size of the timestamps in above causal memory schemes is a function of the number of servers
(that are completely connected to each other).
On the other hand, we allow arbitrary communication networks, with the size of the timestamps
depending on {\em vertex cover} size for the communication network. This enable alternative implementations.

For instance, consider a client-server architecture, wherein a large
number of clients may interact with a large number of servers. Due to the dense
communication (or interconnection) pattern in this case, the cover size
will be large, resulting in large timestamps. An alternative is illustrated in
 Figure \ref{fig:kv}, where the solid edges represent an abstract communication network.
A {\em client} or a {\em server} may communicate with multiple {\em sequencers}.
By design, the {\em sequencers} form a cover of this network. When the
number of servers is much larger than the number of sequencers, this
approach can result in a much smaller vertex cover.
In Figure \ref{fig:kv}, all communication must go through
the sequencers, and the {\em inline} timestamp size is proportional to number of sequencers. 
However, routing all server communication via sequencers can be expensive, since the sequencers
will have to handle a large volume of data.
A simple optimization can mitigate this shortcoming. For example, as shown by a dashed arrow
in Figure \ref{fig:kv}, server $R_1$ may send message contents (data) {\em directly} to server $R_2$, but
server $R_2$ will need to wait to receive {\em metadata}, in the form of timestamp information,
via sequencer $S_1$ (as shown by a dotted arrow in the figure). Thus, while the sequencers must still handle small messages to help determine
inline timestamps, bulk of the traffic can still travel between the servers directly (or similarly
between servers and clients). A similar optimization was suggested previously for
{\em totally-ordered multicast} using a sequencer \cite{Coulouris:2011:DSC:2029110}. This optimization,
in conjunction with our scheme,
provides a trade-off between timestamp size and the delay
incurred in routing metadata through sequencers.

\begin{figure}[ht]
\centering
\centerline{\includegraphics[width = 4in]{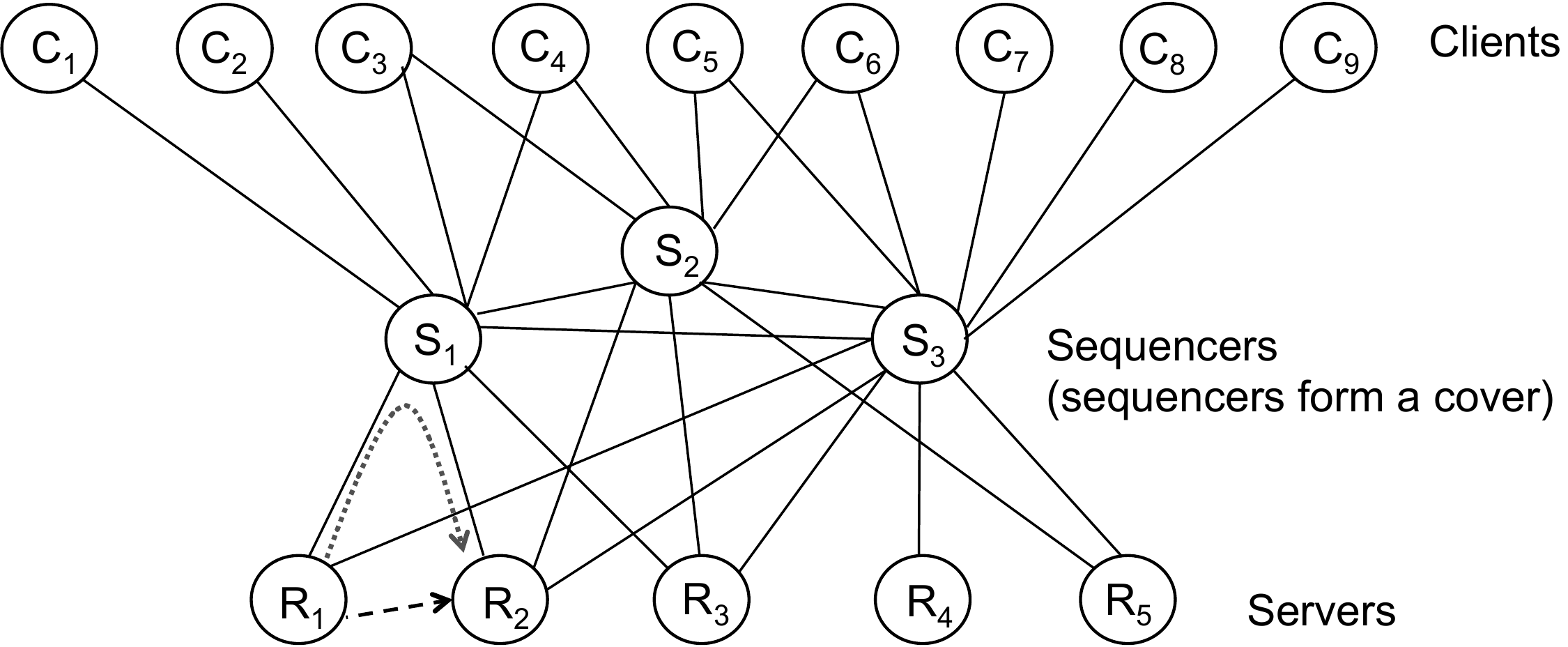}}
\caption{\em Alternate architecture suitable for inline timestamping}
\label{fig:kv}
\end{figure}

\paragraph{Cluster timestamps:}
Ward and Taylor \cite{Ward:2001:SHC:646666.699445} describe an improvement over the strategy previously proposed
by Summers, which divides the processes into clusters. They
maintain short timestamps (proportional to cluster size) for events
that occur {\em inside} the cluster, and longer timestamps (vectors with length equal to total
number of processes) for ``cluster-receive'' events.
In \cite{Ward:2001:SHC:646666.699445}, the "cluster-receive" events are assigned
long timestamps; such long timestamps are not generally necessary in our case.

\subsection{Causal Memory Systems with Smaller Timestamps}
\label{ss:sm}

As discussed previously, causal memory systems use timestamp vectors analogous to the inline timestamps discussed in this paper.
The causal memory systems maintain multiple replicas of the shared data, and
require vectors whose size is equal to the number of replicas \cite{zawirski2015swiftcloud,DBLP:journals/tocs/LadinLSG92}.
When the number of replicas is large, the vector size becomes large. To mitigate this shortcoming, we can envision a
modified architecture for causal memory, based on the idea illustrated in Figure \ref{fig:kv} for a generic client-server systems.
Each server can be viewed as a replica of the shared data. A suitable number of sequencers can be introduced to limit the size of the
timestamps. Performance may be improved by using the optimization described above. 
The prior causal memory algorithms (such as \cite{zawirski2015swiftcloud}) can be easily adapted for the architecture
in Figure \ref{fig:kv}, while incorporating timestamp objects based on inline timestamps (with size dependent on
vertex cover size, instead of the total number of servers).

\section{Summary}

We exploit the knowledge of the communication graph to
reduce timestamp size, while correctly capturing the happened-before relation.
We present an algorithm for assigning {\em inline} timestamps, and show that the
timestamps are often much smaller than vector timestamps assigned by {\em online} algorithms.
Bounds on length of vector timestamps
used by {\em online} algorithms are also presented.


\subsection*{Acknowledgements}

The authors thank Vijay Garg, Ajay Kshemkalyani and Jennifer Welch for their feedback.  We thank Marek Zawirski for answering questions about SwiftCloud \cite{zawirski2015swiftcloud}.

\bibliographystyle{abbrv}
\bibliography{\bibdir/misc,\bibdir/percolation,\bibdir/math-algo,\bibdir/nwtheory,\bibdir/manet,\bibdir/distsys,\bibdir/security,\bibdir/byzantine,\bibdir/gossip,\bibdir/sensors,\bibdir/mchannelproto,\bibdir/faultdiagnosis,\bibdir/strings,\bibdir/newwireless,\bibdir/mobile,\bibdir/vaidya,\bibdir/coding,\bibdir/chkpt_1,\bibdir/distributed,\bibdir/etc,\bibdir/ft,\bibdir/sys_1,\bibdir/voting_1,\bibdir/replicate,\bibdir/software,\bibdir/shared,\bibdir/pradhan,\bibdir/mutual,\bibdir/di,\bibdir/jennyw,\bibdir/shu,\bibdir/adhoc,\bibdir/gavin_mrate,\bibdir/esjung,\bibdir/krishna_ee,\bibdir/saad,\bibdir/smart,\bibdir/multich,\bibdir/security_shu,\bibdir/security_vartika,\bibdir/bill,\bibdir/robin,\bibdir/misbehave,\bibdir/wwt,\bibdir/clocksync,\bibdir/pradeep04march.bib,\bibdir/multichannel-cite.bib,\bibdir/temp.bib,\bibdir/rfid.bib,\bibdir/allbib-samir.bib,\bibdir/pradeepchannel.bib,\bibdir/rufus.bib,\bibdir/matt_energy.bib,\bibdir/newbib.bib,\bibdir/xue-dissertation-ref,\bibdir/lewis_paperlist,\bibdir/liang_paperlist,\bibdir/distributed_chris,\bibdir/distributed2_chris,\bibdir/consensus,\bibdir/ByzRefs,\bibdir/PSDA_DL,\bibdir/sayan1}

\begin{thebibliography}{10}

\bibitem{DBLP:journals/ipl/Charron-Bost91}
B.~Charron{-}Bost.
\newblock Concerning the size of logical clocks in distributed systems.
\newblock {\em Inf. Process. Lett.}, 39(1):11--16, 1991.

\bibitem{Coulouris:2011:DSC:2029110}
G.~Coulouris, J.~Dollimore, T.~Kindberg, and G.~Blair.
\newblock {\em Distributed Systems: Concepts and Design}.
\newblock Addison-Wesley Publishing Company, USA, 5th edition, 2011.

\bibitem{z3}
L.~De~Moura and N.~Bj{\o}rner.
\newblock Z3: An efficient smt solver.
\newblock In {\em Proceedings of the Theory and Practice of Software, 14th
  International Conference on Tools and Algorithms for the Construction and
  Analysis of Systems}, TACAS'08/ETAPS'08, pages 337--340, Berlin, Heidelberg,
  2008. Springer-Verlag.

\bibitem{DBLP:journals/computer/Fidge91}
C.~J. Fidge.
\newblock Logical time in distributed computing systems.
\newblock {\em {IEEE} Computer}, 24(8):28--33, 1991.

\bibitem{DBLP:conf/podc/GargS01}
V.~K. Garg and C.~Skawratananond.
\newblock String realizers of posets with applications to distributed
  computing.
\newblock In {\em Proceedings of the Twentieth Annual {ACM} Symposium on
  Principles of Distributed Computing, {PODC} 2001, Newport, Rhode Island, USA,
  August 26-29, 2001}, pages 72--80, 2001.

\bibitem{DBLP:conf/icdcs/GargS02}
V.~K. Garg and C.~Skawratananond.
\newblock Timestamping messages in synchronous computations.
\newblock In {\em {ICDCS}}, pages 552--559, 2002.

\bibitem{DBLP:journals/dc/GargSM07}
V.~K. Garg, C.~Skawratananond, and N.~Mittal.
\newblock Timestamping messages and events in a distributed system using
  synchronous communication.
\newblock {\em Distributed Computing}, 19(5-6):387--402, 2007.

\bibitem{DBLP:journals/tocs/LadinLSG92}
R.~Ladin, B.~Liskov, L.~Shrira, and S.~Ghemawat.
\newblock Providing high availability using lazy replication.
\newblock {\em {ACM} Trans. Comput. Syst.}, 10(4):360--391, 1992.

\bibitem{lamport78}
L.~Lamport.
\newblock Time, clocks and the ordering of events in a distributed system.
\newblock {\em Communications of the ACM}, 21(7):558--565, July 1978.

\bibitem{Mattern88virtualtime}
F.~Mattern.
\newblock Virtual time and global states of distributed systems.
\newblock In {\em Parallel and Distributed Algorithms}, pages 215--226.
  North-Holland, 1988.

\bibitem{DBLP:conf/podc/MeldalSV91}
S.~Meldal, S.~Sankar, and J.~Vera.
\newblock Exploiting locality in maintaining potential causality.
\newblock In {\em Proceedings of the Tenth Annual {ACM} Symposium on Principles
  of Distributed Computing, Montreal, Quebec, Canada, August 19-21, 1991},
  pages 231--239, 1991.

\bibitem{DBLP:journals/dc/SchwarzM94}
R.~Schwarz and F.~Mattern.
\newblock Detecting causal relationships in distributed computations: In search
  of the holy grail.
\newblock {\em Distributed Computing}, 7(3):149--174, 1994.

\bibitem{DBLP:conf/ispdc/ShenKK13}
M.~Shen, A.~D. Kshemkalyani, and A.~A. Khokhar.
\newblock Detecting unstable conjunctive locality-aware predicates in
  large-scale systems.
\newblock In {\em {IEEE} 12th International Symposium on Parallel and
  Distributed Computing, {ISPDC} 2013, Bucharest, Romania, June 27-30, 2013},
  pages 127--134, 2013.

\bibitem{DBLP:journals/ipl/SinghalK92}
M.~Singhal and A.~D. Kshemkalyani.
\newblock An efficient implementation of vector clocks.
\newblock {\em Inf. Process. Lett.}, 43(1):47--52, 1992.

\bibitem{Torres-Rojas:1999:PCC:1035766.1035768}
F.~J. Torres-Rojas and M.~Ahamad.
\newblock Plausible clocks: Constant size logical clocks for distributed
  systems.
\newblock {\em Distrib. Comput.}, 12(4):179--195, September 1999.

\bibitem{Ward:2001:SHC:646666.699445}
P.~A.~S. Ward and D.~J. Taylor.
\newblock Self-organizing hierarchical cluster timestamps.
\newblock In {\em Proceedings of the 7th International Euro-Par Conference
  Manchester on Parallel Processing}, Euro-Par '01, pages 46--56, London, UK,
  UK, 2001. Springer-Verlag.

\bibitem{zawirski2015swiftcloud}
M.~Zawirski, N.~Preguica, S.~Duarte, A.~Bieniusa, V.~Balegas, and M.~Shapiro.
\newblock Write fast, read in the past: Causal consistency for client-side
  applications.
\newblock In {\em {ACM} Middleware, Vancouver}, December 2015.

\end{thebibliography}

\newpage
\appendix

\centerline{\Large\bf Appendix}

\section{Proof of Theorem \ref{thm1}: Correctness of Inline Algorithm}
\label{sec:proof}

In this section, we prove Theorem \ref{thm1}, which claims that the timestamps provided by our inline algorithm can be used to capture causality.
Recall that partial order on inline timestamps is defined in Section \ref{sec:inline}.
For ease of reference, we define the partial order here again.

Consider two inline timestamps $\tau_1$ and $\tau_2$.
$\tau_1 < \tau_2$ if and only if one of the following is true:
\begin{list}{}{}
\item
\hspace*{0.25in} (i)~ $\tau_1.id =\tau_2.id \neq \perp$ and $\tau_1.index < \tau_2.index$, or

\hspace*{0.25in} (ii) $\tau_1.id=\tau_2.id=\perp$ and $\tau_1.vect < \tau_2.vect$, or

\hspace*{0.25in} (iii) $\tau_1.id=\perp$, $\tau_2.id\neq\perp$, and $\tau_1.vect \leq \tau_2.vect$, or

\hspace*{0.25in} (iv) $\tau_1.id\neq\perp$, $\tau_1.id\neq\tau_2.id$, and 
$\exists i,~0\leq i<c$, such that $(\tau_1.next[i] \leq \tau_2.vect[i])$.
\end{list}

\paragraph{Proof of Theorem \ref{thm1}:}

\begin{proof}

Consider events $e$ and $f$.
Let $t\geq \max(e_r,f_r)$.
Let $\tau_1 = Q^t(e)$ and $\tau_2=Q^t(f)$.

We consider four possibilities that take into account whether $e$ and $f$ occurred at processes in $\CC$ or outside $\CC$. 

~

{\em Case 1: Both $e$ and $f$ occur at processes in $\CC$: } \
In this case, $\tau_1.id = \tau_2.id = \bot$. Hence, condition (ii) above  applies. Since the processes in $\CC$ implement the standard vector clock protocol, $e \rightarrow f$ if and only if $\tau_1.vect < \tau_2.vect$

~

{\em Case 2: $e$ occurs at a process in $\CC$ and $f$ occurs at a process outside $\CC$: } \
In this case, condition (iii) applies. Also, $\tau_1.vect[i]$ (respectively, $\tau_2.vect[i]$) denotes the number of events on $p_i \in \CC$ that happened-before $e$ (respectively, $f$). Thus, $e \rightarrow f$ if and only if $f$ is aware of all events that $e$ is aware of. Note that if $e$ happened before $f$ then $f$ is aware of at least one extra event than $e$. However, this extra event may not be on a process in $\CC$. Thus, we have $e \rightarrow f$ iff $\tau_1.vect \leq \tau_2.vect$.

~

{\em Case 3: $e$ occurs at a process outside $\CC$ and $f$ occurs at a process in $\CC$: } \
In this case, $\tau_1.next[i]$ denotes the earliest time (if it exists) such that there exists an event $g_i$ on process $p_i \in 
\CC$ such that $g_i$ was created due to a message sent by the process where $e$ occurred and received by $p_i$. Since $e$ and $f$ are on different processes, $e \rightarrow f$ iff there exists $g_i$ on $p_i \in \CC$ such that $f = g_i$ or $g_i \rightarrow f$. In the former case, by construction $\tau_1.next[i]= \tau_2.vect[i]$. In the latter case, $f$ is aware of at least one extra event on $\CC$ that $g_i$ was aware of. Hence, if $e \rightarrow f$ then $\tau_1.next[i]\leq \tau_2.vect[i]$.

Also, if $\tau_1.next[i]\leq \tau_2.vect[i]$ then consider the event $g_i$ that is responsible for assignment of $\tau_1.next[i]$. Using the same argument above, $g_i = f$ or $g_i \rightarrow f$. Thus, if $\tau_1.next[i]\leq \tau_2.vect[i]$ then $e \rightarrow f$. 

~

{\em Case 4: Both $e$ and $f$ occur at processes outside $\CC$: } \
Here, we consider two cases: If $e$ and $f$ are on the same process then condition (i) applies, and, $e$ happened before $f$ iff $e$ occurred (by real time) before $f$. In other words, $e \rightarrow f$ iff $\tau_1.index < \tau_2.index$. The second subcase where $e$ and $f$ occur at different processes outside $\CC$ is similar to Case 3 except that
$f$ and $g_i$ cannot be identical in Case 4. 

\end{proof}

\section{Bounds for Vector Timestamp Length with Online Algorithms}

This appendix presents several bounds for the length of vector timestamps assigned by online
algorithms. Recall that, for vector timestamps, the $<$ partial order is defined in Section \ref{sec:vector}.

{\bf
In the discussion below, let $e_q^i$ denote the $q$-th event at process $p_i$.
}

\subsection{Lower Bound for the Star Graph}
\label{app:lower:star}

\paragraph{Real-Valued Vector Timestamps:}

Lemma \ref{l_bound} in Section \ref{sec:vector} shows that $n-1$ is a lower bound on the vector timestamp length
in star graphs when the elements of the vector may be {\em real-valued}. The lemma below derives a lower bound
when the vector elements must be integer-valued.

~

Now we consider the case when the vector elements are constrained to be integer-valued.

\paragraph{Integer-Valued Vector Timestamps:}

\begin{lemma}
\label{l_bound_integer}
Suppose that an {\em online} algorithm for the star graph assigns vector timestamps
with {\bf integer-valued} vector elements,
such that, for any two events $e,f$, $e\rightarrow f$ iff $\tau(e)<\tau(f)$.
Then the vector length must be at least $n$.
\end{lemma}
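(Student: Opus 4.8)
The plan is a proof by contradiction: suppose some online algorithm for the star graph uses integer-valued vectors of length $s$; by Lemma~\ref{l_bound} we may take $s=n-1$ and must reach a contradiction. The key starting observation is that the timestamp an online algorithm assigns to the \emph{first} event of a process, when that event is a local computation event, is a fixed vector determined by the algorithm alone, since the process has not yet sent or received anything; call these vectors $v_0,v_1,\dots,v_{n-1}$, one per process. In the execution in which every process performs exactly one computation event and nothing else, these $n$ events are pairwise concurrent, so $v_0,\dots,v_{n-1}$ are pairwise incomparable vectors in $\mathbb{Z}^{n-1}$.

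Next I would run the ``collector'' argument of Lemma~\ref{l_bound} on these $n$ vectors. Let $E=\max_i v_i$ (componentwise), choose for each coordinate $\ell$ an index attaining $E[\ell]$, and let $S$ be the set of chosen indices, so $|S|\le s=n-1<n$ and some index $k\notin S$ exists with $v_k\le E=\max_{i\in S}v_i$. If $k\ge 1$ (a radial process), consider the execution in which $p_0$ and every radial $p_i$ with $i\neq k$ performs a computation event and then every radial $p_i$ with $i\neq k$ sends one message to $p_0$, all of which $p_0$ receives: $p_0$'s final receive event $g$ satisfies $\tau(g)\ge v_0$ and $\tau(g)\ge v_i$ for every $i\in S$, hence $\tau(g)\ge E\ge v_k=\tau(e_1^k)$; since $g\neq e_1^k$ this forces $e_1^k\rightarrow g$, contradicting that $p_k$ never communicated and $e_1^k$ is concurrent with $g$. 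The only remaining possibility is that $k=0$ is forced; a short counting argument shows this happens exactly when every valid choice of $S$ equals $\{1,\dots,n-1\}$, i.e.\ when each radial is the \emph{unique} coordinate-maximizer in exactly one coordinate (if $v_0$ ever attained a coordinate-maximum, or if some radial were never a unique maximizer, we could produce a valid $S$ omitting a radial and be back in the previous case). After re-indexing coordinates this says $v_i[\,i-1\,]>v_j[\,i-1\,]$ for all $j\neq i$, and in particular $v_0[\ell]<\max_i v_i[\ell]$ for every coordinate $\ell$.

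Ruling out the case $k=0$ is the crux, and here integrality is essential --- the real-valued analogue is precisely the construction behind the matching upper bound $n-1$ (Appendix~\ref{app:upper:star}). I would exploit the freedom to let $p_0$ sit idle arbitrarily long: in the execution where $p_0$ performs a chain of computation events $h_0\rightarrow h_1\rightarrow\cdots\rightarrow h_M$ and nothing else communicates, every $\tau(h_m)$ satisfies $\tau(h_m)\ge \tau(h_0)=v_0$ and is concurrent with, hence incomparable to, every $v_i$ with $i\ge 1$. Because coordinates are nondecreasing along a chain, for each $i$ the set of coordinates witnessing $\tau(h_m)\not\ge v_i$ can only shrink in $m$; it stabilizes to some nonempty set $L_i$, and for $\ell\in L_i$ the value $\tau(h_m)[\ell]$ is trapped in the finite integer interval $[\,v_0[\ell],\,v_i[\ell]-1\,]$ for all $m$ (using $v_i[\ell]>v_0[\ell]$ there). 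If $\bigcup_i L_i$ covers all $n-1$ coordinates, then all $\tau(h_m)$ lie in one fixed finite box, contradicting that $h_0,\dots,h_M$ is a strictly increasing chain of arbitrary length $M$. The delicate step --- and the main obstacle --- is forcing $\bigcup_i L_i$ to be everything: an adaptive algorithm can keep some coordinate growing, and I expect one closes this by following the idle chain with a receive at $p_0$ from the radial that ``owns'' an unbounded coordinate and using the strict-maximum structure to show $\tau$ of that receive would then dominate some $v_i$, a contradiction --- or else by further augmenting the execution until every coordinate is pinned. Since the standard vector clock attains length $n$, the bound is then tight.
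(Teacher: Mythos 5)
Your skeleton (run the Lemma~\ref{l_bound} ``collector'' argument, and use integrality only to handle the awkward case where the excluded process would have to be $p_0$) is the right shape, and your radial-excluded branch is fine. But the crux --- ruling out the case where every valid maximizer set $S$ consists of all $n-1$ radials --- is exactly where your argument has a genuine gap, and you say so yourself. The finite-box contradiction needs $\bigcup_i L_i$ to cover all $n-1$ coordinates, and nothing in the setup forces this: the algorithm is free to let $p_0$'s timestamps keep a designated ``local counter'' coordinate growing along the idle chain while the other coordinates stay below the radial maxima, so every $\tau(h_m)$ remains incomparable to every $v_i$ without ever being trapped in a bounded region. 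The proposed patches (a receive from the radial that ``owns'' the unbounded coordinate, or ``augmenting until every coordinate is pinned'') are not worked out, and the first one does not obviously yield a domination of some $v_i$: a receive at $p_0$ legitimately dominates the sender's timestamp, so no contradiction comes for free. As it stands the lemma is not proved.

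The paper closes this case by turning the unbounded growth you were fighting into the needed ingredient, rather than trying to bound it. In its execution each radial sends one message to $p_0$; let $M$ be the largest element appearing in those $n-1$ send timestamps (these do not depend on what $p_0$ does locally, since $p_0$ sends nothing). Then $p_0$ is made to perform $P=(M+2)n$ computation events before receiving anything. Along this chain the timestamps form a strictly increasing sequence of non-negative integer vectors of length $s\leq n-1$, so by pigeonhole some coordinate of the last computation event exceeds $M$. Now build the maximizer set $S$ from the $n$ events consisting of that late event at $p_0$ together with the radial send events: $p_0$ necessarily occupies a slot in $S$, so with only $s\leq n-1$ slots some radial $p_k$ is excluded, and delivering $p_k$'s message last reproduces your (and Lemma~\ref{l_bound}'s) contradiction $\tau(e_1^k)<\tau(e)$ for concurrent events. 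In short: instead of arguing that $p_0$'s chain must stay in a finite box (false in general), use integrality plus non-negativity to force $p_0$ into the set of coordinate maximizers, which guarantees the excluded process is a radial and reduces everything to the real-valued argument you already have.
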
 
\begin{proof}
Without loss of generality, let us assume that all vector elements of a vector timestamp must be
non-negative integers.
The proof of the lower bound is trivial for $n=1$. Now assume that $n\geq 2$.
The proof is by contradiction. Suppose that the vector length is $s\leq n-1$.

Consider an execution that includes a send event
$e_1^i$ at process $p_i$, $1\leq i\leq n-1$,
where the radial process $p_i$ sends a message to the central process $p_0$.
Let $M$ be the largest value of any of the $s$ elements of the timestamps of any of these $n-1$ send events.
Suppose that process $p_0$ initially performs $P$ computation events.
Assume that there are no other events; thus, process $p_0$ does not send any messages.
Thus, the timestamps of the send events at the radial processes cannot depend on $P$,
the number of computation events at $p_0$. Thus, we can assume that $P=(M+2)n$.
Since these $(M+2)n$ computation events occur at $p_0$ before it receives
any messages, the timestamps for these events are computed before $p_0$ learns
timestamps of any send events at the other processes. Since the timestamp elements
are constrained to be non-negative integers, one of the elements of the timestamp of the last of
these computation events at $p_0$, namely $e_{Pn}^0$ must be $>M$.
Recall that $P=(M+2)n$.

Consider set $W$ that contains event $e_{Pn}^0$ and $e_1^i,~0< i<n$.
Thus, $W$ contains $n$ events, with one event at each of the $n$ processes.
Create a set $S$ of processes as follows:
for each $l$, $0\leq l< s$,
add to $S$ any one process $p_j$ such that
the $l$-th element of the timestamp of its event in $W$ is the largest
among the $l$-th elements of the timestamps of all the events in $W$.
Clearly, $p_0\in S$ and $|S|\leq s\leq n-1$.
Consider a radial process $p_k\not\in S$ (note that $p_k\neq p_0$). Such a process
$p_k$ must exist since $|S|\leq n-1$, $p_0\in S$, and there are $n-1$ radial processes.

Suppose that the message sent by process $p_k$ at event $e_1^k$ reaches process $p_0$ after all
the other messages, including messages from the {\em radial} processes in $S$, reach process $p_0$.
Let $e=e_{Pn+n-2}^0$. By event $e$ at $p_0$, 
except for the message sent by process $p_k$, all the other messages, including messages
sent by all the radial processes in $S$, are received by process $p_0$.

Rest of the proof of this lemma is similar to the proof of Lemma \ref{l_bound}.
In particular,
define vector $E$ such that $E[l] = \max_{1\leq i<n}\, \tau(e_1^i)[l]$, $0\leq l< s$.
By definition of $W$, we also have that
$E[l] = \max_{p_i\in W}\, \tau(e_1^i)[l]$, $0\leq l< s$.
The above assumption about the order of message delivery implies that $$E \leq \tau(e_{Pn+n-2}^0).$$
Also, since $p_k\not\in W$, we have that $\tau(e_1^k)\leq E$.
This implies that $\tau(e_1^k)\leq \tau(e_{Pn+n-2}^0)$.

Since $e_1^k\neq e_{Pn+n-2}^0$, their timestamps must be distinct too. This
implies that $\tau(e_1^k)< \tau(e_{Pn+n-2}^0)$, which, in turn, implies that
$e_1^k\rightarrow e_{Pn+n-2}^0$.
However, $e_1^k$ and $e_{Pn+n-2}^0$ are concurrent events, leading to a contradiction.
\end{proof}

\subsection{Upper Bound for the Star Graph: Real-Valued Elements}
\label{app:upper:star}

For a star graph with $n=1,2$, it is easy to show that the vector length must be at least $n$,
and also that vector length $n$ suffices using the standard vector clock algorithm.
Thus, the bound of Lemma \ref{l_bound} is not tight for $ n=1, 2$.

{\bf
In the rest of this section, we focus on $n\geq 3$.
}

Lemma \ref{l_bound} shows that $n-1$ is a lower bound on the vector length used
by an online algorithm for star graphs. Now we
constructively show that this bound is tight for $n\geq 3$
by presenting an online algorithm for computing vector timestamps of length $n-1$.
The vector elements of timestamps assigned by the algorithm below
are real-valued. (If the elements are constrained to be integers, then,
as shown in Lemma \ref{l_bound_integer}, vector length of $n$ will be required.) 

We first define a function $update$ that takes process identifier $p_i$ and a vector $w$
of length $n-1$ as its arguments, and returns an updated vector.
The $n-1$ elements of the vector timestamps have indices 1 through $n-1$.
Update performed by the central process $p_0$ is different than the update performed by the radial processes.

~

\noindent
Function $update(p_k,w)$ 

\hspace*{0.25in}\{

\hspace*{0.35in} if ($p_k \neq p_0$)

\hspace*{0.6in} $w[k] :=  \lfloor w[k]+1\rfloor$ ~~~~// smallest integer larger than
		the original value of $w[k]$

\hspace*{0.35in} else

\hspace*{0.6in} for $1\leq j\leq n-1, ~~ w[j] := \mbox{any value in the
		open interval}~ (w[j], \lfloor w[j]+1\rfloor)$ 

~

\hspace*{0.25in}~~return ($w$)

\hspace*{0.25in} \}

\paragraph{Online algorithm for process $p_i$, $0\leq i<n$:}
Process $p_i$ maintains a vector $v^i$ of length $n-1$.  Initially, $v^i:={\bf 0}$.
For each event $e$ at $p_i$, perform the following steps:
\begin{enumerate}
\item If $e$ is a receive event:

\hspace*{0.25in} $u$ := vector timestamp piggybacked on message received at event $e$

\hspace*{0.25in} $v^i := \max(u,v^i)$

\item $v^i := update(p_i,v^i)$
\item $\tau(e)$ :=  $v^i$ 
\item If $e$ is a send event, then piggyback $v^i$ on the message sent at event $e$.
\end{enumerate}

Note that steps 2 and 3 above are performed for all events. Steps 1 and 4 above
are performed only for {\em receive} and {\em send} events, respectively.

\paragraph{Correctness of the Online Algorithm:} For any two events $e$ and $f$,
the algorithm assigns timestamps $\tau(e)$ and $\tau(f)$, respectively, such that $e\rightarrow f$ if and only if $\tau(e)<\tau(f)$.
The proof is straightforward and omitted here.

\subsection{Bounds for Communication Graphs with Connectivity {\Large $\kappa$}}
\label{sec:bounds:arbitrary}

%

\subsubsection{Communication graphs with vertex connectivity $\geq 2$}

\begin{lemma}
\label{l_bound_arbitrary}
Suppose that the communication graph has vertex connectivity $\geq 2$. 
For this graph, an {\em online} algorithm assigns distinct vector timestamps
to distinct events such that, for any two events $e$ and $f$, $e\rightarrow f$ if and only if $\tau(e)<\tau(f)$.
Then the vector length must be at least $n$.
\end{lemma}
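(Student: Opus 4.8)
The plan is to adapt the lower-bound argument of Lemma \ref{l_bound} (and its integer-valued cousin Lemma \ref{l_bound_integer}) to an arbitrary graph with vertex connectivity $\kappa \geq 2$. The key structural fact I want to exploit is that in a $2$-connected graph, for every vertex $p_k$ the graph $\GG \setminus \{p_k\}$ is still connected, so there is a way to ``route around'' any single process. I would proceed by contradiction: assume an online algorithm assigns distinct vector timestamps of length $s \leq n-1$, and produce two concurrent events whose timestamps are forced to be comparable.

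First I would construct an execution in which each process $p_i$ has a designated ``early'' event $e^i$ — to make all $n$ of these events pairwise concurrent I would let each $e^i$ be an internal (computation) event performed by $p_i$ before any message is sent or received anywhere, so the execution starts with $n$ concurrent events, one per process. Since the online algorithm must timestamp $e^i$ before $p_i$ has heard from anyone, $\tau(e^i)$ is fixed independently of what the other processes do. Now form the set $W = \{e^0, \dots, e^{n-1}\}$ of these $n$ events, and build a ``dominating set'' $S$ of processes exactly as in Lemma \ref{l_bound}: for each coordinate $l$, $0 \leq l < s$, put into $S$ one process whose event in $W$ attains the coordinate-wise maximum $E[l] = \max_i \tau(e^i)[l]$. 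Then $|S| \leq s \leq n-1$, so there is some process $p_k \notin S$; its event $e^k$ satisfies $\tau(e^k) \leq E$ coordinate-wise.

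The heart of the argument is then to extend the execution so that some later event $f$ at a process $p_m \neq p_k$ becomes causally aware of every $e^i$ with $p_i \in S$, yet remains concurrent with $e^k$. This is where $2$-connectivity enters. After the initial round of concurrent events, I would have every process in $S$ send a ``gossip'' message that is propagated through the network, arranging the delivery order so that this information reaches some target process $p_m$ (I can take $p_m$ to be any process in $S$, or a neighbor of one) without ever passing through $p_k$ — possible precisely because $\GG \setminus \{p_k\}$ is connected when $\kappa \geq 2$. Meanwhile $p_k$ sends nothing after $e^k$ (or its messages are delayed past the relevant event), so no event downstream of $e^k$ occurs before $f$; hence $e^k \not\rightarrow f$, and since $p_k$ receives nothing relevant either, $f \not\rightarrow e^k$, i.e. $e^k$ and $f$ are concurrent. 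By the routing, $\tau(e^i) \leq \tau(f)$ for every $p_i \in S$, so $E \leq \tau(f)$, and therefore $\tau(e^k) \leq E \leq \tau(f)$. Since $e^k \neq f$ their timestamps are distinct, so $\tau(e^k) < \tau(f)$, forcing $e^k \rightarrow f$ — contradicting concurrency.

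The main obstacle I anticipate is the bookkeeping for the ``route around $p_k$'' step: I must make sure that the propagation of $S$'s information to $p_m$ genuinely avoids $p_k$ and that the chosen message-delivery schedule is consistent with an asynchronous execution (arbitrary but finite delays, FIFO or not as the model allows), while simultaneously keeping $e^k$ concurrent with the final event $f$. Concretely, I would fix a spanning connected subgraph of $\GG \setminus \{p_k\}$, route all of $S$'s messages along tree paths in it toward $p_m$, and delay every message incident to $p_k$ until after $f$ occurs. A secondary subtlety, exactly as in Lemma \ref{l_bound_integer}, is that if the graph happens to be a star (which has $\kappa = 1$, so it is excluded here) the argument would break; I should note that $\kappa \geq 2$ is precisely what guarantees the needed alternate route, and that the bound $n$ is matched from above by the standard vector clock, so it is tight.
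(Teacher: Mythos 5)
Your proposal is correct and follows essentially the same route as the paper's proof: both assume length $s\leq n-1$, build the coordinate-wise ``dominating'' set $S$ from the timestamps of initial concurrent events, pick $p_k\notin S$, and use vertex connectivity $\geq 2$ to flood the information of the other $n-1$ processes to some process while delaying everything incident to $p_k$, yielding $\tau(e^k)\leq E\leq\tau(f)$ and hence a false causal order between concurrent events. The only differences are cosmetic (initial computation events and tree routing around $p_k$ in your version, versus first send events of a flood with slow channels at $p_k$ and a $\delta D$ delay bound in the paper's).
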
 
\begin{proof}
Recall that $\GG$ is the communication graph formed by the $n$ processes.

This proof is analogous to the proof of Lemma \ref{l_bound}.
The proof is trivial for $n\leq 2$.

Now assume that $n\geq 3$.
The proof is by contradiction. Suppose that the vector length is $s\leq n-1$.


Consider an execution in which,
initially, each process $p_i$, $0\leq i<n$, sends a message
to each of its neighbors in the communication graph. Subsequently, whenever a message is received from
any neighbor, a process forwards the message to all its other neighbors. 
Thus, essentially, the messages are being flooded throughout the network (the execution is infinite,
although we will only focus on a finite subset of the events).

Create a set $S$ of processes as follows: for each $l$, $0\leq l< s$,
add to $S$ any one process $p_j$ such that $\tau(e_1^j)[l] = \max_{0\leq i<n}\, \tau(e_1^i)[l]$.
Clearly, $|S|\leq s\leq n-1$.
Consider a process $p_k\not\in S$. Such a process
$p_k$ must exist since $|S|\leq n-1$.

Suppose that all the communication channels between $p_k$ and its neighbors are very slow,
but each of the remaining communication channels has a delay upper bounded by some constant $\delta>0$.
For convenience of discussion, let us ignore local computation delay between the receipt
of a message at a process and its forwarding to the neighbors. Let $D$ be defined as the maximum over the diameters
of all the subgraphs of $\GG$ containing $n-1$ vertices.
Let the delay on all communication channels of $p_k$ be $>2\delta D$.
Because the network's vertex connectivity is $\geq 2$, within duration $\delta D$,
$n-1$ processes, except $p_k$,
will have received messages initiated by those $n-1$ processes (i.e.,
all messages except the message initiated by $p_k$). 

Define vector $E$ such that $E[l] = \max_{0\leq i<n}\, \tau(e_1^i)[l]$, $0\leq l< s$.
By definition of $S$, we also have that
$E[l] = \max_{p_i\in S}\, \tau(e_1^i)[l]$, $0\leq l< s$.

Consider any process $p_i\neq p_k$.
Let $e$ be the earliest receive event at $p_i$ such that by event $e$ (i.e., including event
$e$), $p_i$ has received the messages initiated by all processes except $p_k$.
Due to the definition of $D$ and $\delta$, event $e$ occurs at $p_i$ by time $\delta D$.
Since by event $e$, $p_i$ has received the messages initiated by all other processes except $p_k$,
and $p_k\not\in S$,
we have $$E\leq \tau(e).$$


Also, since $p_k\not\in S$, we have $$\tau(e_1^k)\leq E.$$
The above two inequalities together imply that $\tau(e_1^k)\leq \tau(e)$.

Since $e_1^k$ and $e$ occur on different processes, $e_1^k\neq e$, and
their timestamps must be distinct too. Therefore,
$\tau(e_1^k)< \tau(e)$, which, in turn, implies that
$e_1^k\rightarrow e$.
However, $e_1^k$ and $e$ are concurrent events, because $e_1^k$ is the first event at $p_k$,
there are no messages received by $p_k$ before $2\delta D$, and similarly, 
no process receives messages from $p_k$ during $2\delta D$.
This results in a contradiction.
\end{proof}

~

For any graph, upper bound of $n$ is obtained by using the standard vector clock algorithm
for $n$ processes \cite{Mattern88virtualtime,DBLP:journals/computer/Fidge91}.
Thus, the bound $n$ is tight for communication graphs with vertex connectivity $\geq 2$.

~

\subsubsection{Communication graphs with vertex connectivity 1}

\begin{lemma}
\label{l_conn1}
Suppose that the communication graph has vertex connectivity $=1$. 
Define $X$ to be the set of processes such that no process in set $X$ by itself forms a vertex cut of size 1.
For this graph, an {\em online} algorithm assigns distinct vector timestamps
to distinct events such that, for any two events $e$ and $f$, $e\rightarrow f$ if and only if $\tau(e)<\tau(f)$.
Then the vector length must be at least $|X|$.
\end{lemma}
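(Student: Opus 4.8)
The plan is to adapt the lower-bound argument of Lemma \ref{l_bound} (the star-graph case), replacing the role of the radial processes by the set $X$ and the role of the central process by a vertex cut. Since the graph has vertex connectivity $1$, there exists a cut vertex; more importantly, by definition of $X$, every process \emph{outside} $X$ is (potentially) a cut vertex, and no process \emph{inside} $X$ is. The key structural fact I would extract first is this: for any $p_k \in X$, because $p_k$ alone is not a cut vertex, removing $p_k$ leaves the graph connected, so the remaining processes $\PP \setminus \{p_k\}$ can flood messages among themselves without routing through $p_k$. This is the analogue of "all radial messages reach $p_0$ without involving $p_k$" in the star proof.

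The argument then runs by contradiction: suppose an online algorithm uses vectors of length $s \leq |X| - 1$. First I would set up a flooding execution in which, initially, every process sends a message to all its neighbors, and every received message is forwarded to all other neighbors (as in the proof of Lemma \ref{l_bound_arbitrary}). Let $e_1^i$ denote the first event at $p_i$. Then I would build the set $S$ by picking, for each coordinate $l$, $0 \leq l < s$, one process $p_j \in X$ attaining $\tau(e_1^j)[l] = \max_{p_i \in X} \tau(e_1^i)[l]$; so $|S| \leq s \leq |X| - 1$, and there is some $p_k \in X \setminus S$. Define $E[l] = \max_{p_i \in X}\tau(e_1^i)[l]$; then $E = \max_{p_i \in S}\tau(e_1^i)[l]$ coordinatewise, and $\tau(e_1^k) \leq E$ since $p_k \notin S$.

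Next I would slow down all communication channels incident to $p_k$ to be arbitrarily large (say, larger than $2\delta D$ where $\delta$ bounds every other channel's delay and $D$ bounds the diameter of every connected subgraph of $\GG$ on $\PP \setminus \{p_k\}$), while bounding every other channel by $\delta$. Using the structural fact above — that $\PP \setminus \{p_k\}$ is connected — within time $\delta D$ every process $p_i \neq p_k$ has received the first messages originated by all processes in $S$ (indeed by all of $\PP \setminus \{p_k\}$), at an event $e$ with $E \leq \tau(e)$. Combining with $\tau(e_1^k) \leq E$ gives $\tau(e_1^k) \leq \tau(e)$; since $e_1^k \neq e$ (different processes) and timestamps are distinct, $\tau(e_1^k) < \tau(e)$, hence $e_1^k \rightarrow e$. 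But $e_1^k$ is the first event at $p_k$, no message reaches $p_k$ before $2\delta D$, and no process hears from $p_k$ before $2\delta D$, so $e_1^k$ and $e$ are concurrent — a contradiction.

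The main obstacle I anticipate is making precise the claim "within time $\delta D$ every process $p_i \neq p_k$ has received the messages originated by everyone except $p_k$." This requires knowing that $\PP \setminus \{p_k\}$ is connected (which is exactly why we need $p_k \in X$, i.e. $p_k$ is not a cut vertex) and that flooding restricted to this subgraph delivers each originated message within the subgraph's diameter — but one must be careful that messages flooded by $p_k$'s neighbors do not "need" $p_k$, and that the definition of $D$ as the max over \emph{all} $(n-1)$-vertex subgraphs (or at least over $\PP \setminus \{p_k\}$) is the right uniform bound. A secondary subtlety is that, unlike the star case, here it is not automatic that $e$ and $e_1^k$ occur at distinct processes unless we also observe $e$ is a receive event at some $p_i \neq p_k$; this is immediate since $p_k$ itself receives nothing in the relevant time window. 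Once these points are nailed down, the inequality-chasing is identical to Lemma \ref{l_bound}.
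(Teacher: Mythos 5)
Your proposal is correct and follows essentially the same route as the paper's proof: the same contradiction setup with $s\leq |X|-1$, the same construction of $S$ and $E$ restricted to $X$, the same adversarial slowing of $p_k$'s channels against the $\delta D$ flooding bound (justified by $p_k\in X$ not being a cut vertex), and the same concurrency contradiction. The only cosmetic difference is that you let all processes initiate floods while the paper floods only messages originated by processes in $X$; this does not affect the argument.
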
 
\begin{proof}

Recall that $\GG$ is the communication graph formed by the $n$ processes.

This proof is analogous to the proof of Lemma \ref{l_bound_arbitrary}.
The proof is trivial for $|X|=1$.

Now assume that $|X|\geq 2$. The proof is by contradiction. Suppose that the vector length is $s\leq |X|-1$.


Consider an execution in which,
initially, each process $p_i\in X$ sends a message
to each of its neighbors in the communication graph. Subsequently, whenever a message is received from
any neighbor, a process forwards the message to all its other neighbors. 
Thus, essentially, the messages initiated by processes in $X$ are being flooded throughout the network (the execution is infinite,
although we will only focus on a finite subset of the events).

Create a set $S$ of processes as follows: for each $l$, $0\leq l< s$,
add to $S$ any one process $p_j\in X$ such that $\tau(e_1^j)[l] = \max_{p_i\in X}\, \tau(e_1^i)[l]$.
Clearly, $|S|\leq s\leq |X|-1$.
Consider a process $p_k\in X$ such that $p_k\not\in S$. Such a process
$p_k$ must exist since $|S|\leq |X|-1$.

Suppose that all the communication channels between $p_k$ and its neighbors are very slow,
but each of the remaining communication channels has a delay upper bounded by some constant $\delta>0$.
For convenience of discussion, let us ignore local computation delay between the receipt
of a message at a process and its forwarding to the neighbors. Let $D$ be defined as the maximum over the diameters
of all the subgraphs of $\GG$ containing all vertices except any one vertex in $X$ (there are $|X|$ such subgraphs). By definition of $X$,
removing any one process in $X$ from the graph $\GG$ will not partition the subgraph.
Let the delay on all communication channels of $p_k$ be $>2\delta D$.
Within duration $\delta D$, all $n-1$ processes, except $p_k$,
will have received the messages initiated by the $|X|-1$ processes in $X-\{p_k\}$.

Define vector $E$ such that $E[l] = \max_{p_i\in X}\, \tau(e_1^i)[l]$, $0\leq l< s$.
By definition of $S$, we also have that
$E[l] = \max_{p_i\in S}\, \tau(e_1^i)[l]$, $0\leq l< s$.

Consider any process $p_i\in X$ such that $p_i\neq p_k$.
Let $e$ be the earliest receive event at $p_i$ such that by event $e$ (i.e., including event
$e$), $p_i$ has received the messages initiated by all processes except $p_k$.
Due to the definition of $D$ and $\delta$, event $e$ occurs at $p_i$ by time $\delta D$.
Since by event $e$, $p_i$ has received the messages initiated by all other processes except $p_k$,
and $p_k\not\in S$,
we have $$E\leq \tau(e).$$


Also, since $p_k\not\in S$, we have $$\tau(e_1^k)\leq E.$$
The above two inequalities together imply that $\tau(e_1^k)\leq \tau(e)$.

Since $e_1^k$ and $e$ occur on different processes, $e_1^k\neq e$, and
their timestamps must be distinct too. Therefore,
$\tau(e_1^k)< \tau(e)$, which, in turn, implies that
$e_1^k\rightarrow e$.
However, $e_1^k$ and $e$ are concurrent events, because $e_1^k$ is the first event at $p_k$,
there are no messages received by $p_k$ before $2\delta D$, and similarly, 
no process receives messages from $p_k$ during $2\delta D$.
This results in a contradiction.
\end{proof}

~

Observe that for star graph, vertex connectivity is 1, and $X$ consists of all the radial
processes. Thus, $|X|=n-1$.

~

For a communication graph with vertex connectivity 1,
an upper bound of $n-1$ (not necessarily tight) is obtained by assigning the
role of $p_0$ in the star graph to any one
process that forms a cut of the communication graph, and then using the
vector timestamping algorithm presented previously for the star graph.
In general, there is a gap between the above upper bound of $n-1$ and lower bound of $|X|$.
It is presently unknown whether $|X|$ is a tight bound for online algorithms that assign
vector timestamps.

\section{Vector Length 2 Insufficient for Star Graph with Offline\\ Algorithms}
\label{a:offline}

Results presented above show that, for certain graphs,
including a star graph, vector timestamps of length $n-1$ or $n$ are
required when using {\em online} algorithm.
In particular, for a star graph, with real-valued vectors, vector timestamp length of $n-1$
is required.
Recall that, for vector timestamps, we use the partial order $<$ defined in Section \ref{sec:vector}.

This section considers whether the requirement can be reduced with
an offline timestamp algorithm for {\em star} graphs. As
noted in Section \ref{sec:related}, it is known that
for complete networks vectors of length $n$ are required in general.
However, it is not clear whether smaller length may suffice for offline algorithms
for restricted graphs, such as the star graph. 
Here we take a small step in resolving this question.
In particular, we consider a star graph with 4 processes, and show that a vector of length at least
3 is required even when using an {\em offline} algorithm.
Extension of this result to a star graph with larger number of processes is presently an open problem.


\begin{theorem}
Given a system of 4 processes, there does not exist an offline algorithm that assigns each event $e$ a vector $vc_e$ of size 2 such that 
\begin{tabbing}
\hspace*{5mm} \= $e \rightarrow f$\\
iff\\
\> $(vc_e[0] \leq vc_f[0] \wedge vc_e[1] \leq vc_f[1]) \wedge$\\
\> $(vc_e[0] < vc_f[0] \vee vc_e[1] < vc_f[1]) $
\end{tabbing}
\end{theorem}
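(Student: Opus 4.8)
The plan is to argue by contradiction: assume an offline algorithm assigns every event a $2$-dimensional vector $vc_e$ capturing $\rightarrow$ via the stated partial order, and exhibit a single finite execution on the $4$-process star (central process $p_0$, radial processes $p_1,p_2,p_3$) that no such assignment can satisfy. The natural execution to use is the one from the lower-bound lemmas: each radial process $p_i$ ($i=1,2,3$) performs one send event $e_1^i$ to $p_0$, and $p_0$ has three receive events; one can also interleave the receives so that $p_0$ sees the three messages in some chosen order. Since the three send events $e_1^1,e_1^2,e_1^3$ are pairwise concurrent, their vectors must be pairwise incomparable in $\mathbb{R}^2$ under the coordinatewise order. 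The key combinatorial fact is that a set of pairwise-incomparable points in $\mathbb{R}^2$ (an antichain) is totally ordered by, say, the first coordinate (equivalently, reverse-ordered by the second): so WLOG $vc_{e_1^1}[0] < vc_{e_1^2}[0] < vc_{e_1^3}[0]$ forces $vc_{e_1^1}[1] > vc_{e_1^2}[1] > vc_{e_1^3}[1]$. This pins down a rigid ``staircase'' structure that I will exploit.

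The main step is to add receive events at $p_0$ and derive a contradiction from the constraints they impose. Consider the receive event $r$ at $p_0$ at which $p_0$ has received the messages from $p_1$ and $p_3$ but not yet from $p_2$ (arrange the delivery order as $p_1$, then $p_3$, then $p_2$). Then $e_1^1 \rightarrow r$ and $e_1^3 \rightarrow r$, so $vc_r \geq \max(vc_{e_1^1}, vc_{e_1^3})$ coordinatewise; by the staircase, $\max(vc_{e_1^1}, vc_{e_1^3}) = (vc_{e_1^3}[0], vc_{e_1^1}[1])$, which dominates $vc_{e_1^2}$ in \emph{both} coordinates (first coordinate: $vc_{e_1^3}[0] > vc_{e_1^2}[0]$; second coordinate: $vc_{e_1^1}[1] > vc_{e_1^2}[1]$). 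Hence $vc_{e_1^2} \leq vc_r$ coordinatewise, and since $e_1^2 \neq r$ their vectors are distinct (distinct events get distinct timestamps), so $vc_{e_1^2} < vc_r$, forcing $e_1^2 \rightarrow r$. But $e_1^2$ is concurrent with $r$ (the message from $p_2$ has not been received by event $r$, and $r$'s message — there is none, $r$ is a receive — certainly hasn't influenced $p_2$), a contradiction.

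The subtlety I would be careful about — and I expect this to be the main obstacle — is ruling out the degenerate possibility that two of the send-event vectors \emph{share} a coordinate (e.g.\ $vc_{e_1^1}[0] = vc_{e_1^2}[0]$), since then the ``totally ordered antichain'' claim needs the other coordinate to be strictly ordered, and one must check that incomparability still forces the staircase with the appropriate strictness. This is handled by noting that if $vc_{e_1^1}[0] = vc_{e_1^2}[0]$ then incomparability of the two vectors requires $vc_{e_1^1}[1] \neq vc_{e_1^2}[1]$, say $vc_{e_1^1}[1] > vc_{e_1^2}[1]$; but then $vc_{e_1^1} \geq vc_{e_1^2}$ coordinatewise with a strict inequality, giving $vc_{e_1^2} < vc_{e_1^1}$, hence $e_1^2 \rightarrow e_1^1$ — contradicting concurrency. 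So in fact all three first coordinates are distinct and all three second coordinates are distinct, the staircase is strict, and the argument of the previous paragraph goes through. (One should also double-check the final concurrency claim by spelling out that in the chosen execution no causal path connects $e_1^2$ and $r$; this is immediate since the only messages are the three $p_i \to p_0$ messages and $r$ precedes $p_0$'s receipt of $p_2$'s message.) The theorem follows, since the assumed length-$2$ offline algorithm cannot exist, so length at least $3$ is required.
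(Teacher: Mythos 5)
Your argument has a genuine gap: it quietly reverses the order of quantification between the execution and the timestamp assignment, which is exactly the distinction between \emph{online} and \emph{offline} algorithms. You first let the three send vectors form a staircase, declare ``WLOG'' that $e_1^2$ sits in the middle, and only then ``arrange the delivery order'' so that $p_2$'s message arrives last. That adversarial choice of delivery order after the vectors are fixed is legitimate for an online algorithm (it is precisely the paper's Lemma~\ref{l_bound}), but an offline algorithm is handed the entire execution --- including the delivery order at $p_0$ --- before it assigns any vectors. Once the execution ``deliver from $p_1$, then $p_3$, then $p_2$'' is fixed, the relabeling is no longer without loss of generality: the algorithm can place the last-delivered sender at an extreme of the staircase and satisfy every constraint. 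Concretely, with sends $e_1^1,e_1^3,e_1^2$ and receives $r_1,r_2,r_3$ (from $p_1$, $p_3$, $p_2$ in that order), the assignment $vc_{e_1^1}=(1,6)$, $vc_{e_1^3}=(2,5)$, $vc_{e_1^2}=(3,4)$, $vc_{r_1}=(1,7)$, $vc_{r_2}=(2,8)$, $vc_{r_3}=(4,9)$ realizes the happened-before relation of your execution exactly, so no contradiction can be extracted from it. Indeed the paper explicitly leaves open whether the online star-graph lower bounds transfer to offline algorithms, which is the whole point of this theorem.

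The paper's proof therefore uses a different and necessarily more intricate execution: a specific communication diagram on $4$ processes (Figure~\ref{fig:novcsize2}) for which the full set of constraints induced by the ``iff'' condition (comparabilities for every causally ordered pair, and incomparabilities with distinctness for every concurrent pair) was shown unsatisfiable with the Z3 SMT solver, with a manual verification mentioned but not reproduced. To repair your approach you would need to exhibit such an execution whose event poset cannot be realized by length-$2$ vectors under the stated order --- the single-round star execution you chose demonstrably can be.
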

\begin{proof}
To prove this theorem, we generated a counterexample with guidance from SMT solver Z3 \cite{z3}. Specifically, given a communication diagram, for any two events, we introduce constraints based on whether the pair satisfies the happened-before relation or not. Subsequently, we use Z3 to check if those constraints are satisfied. For the communication diagram in Figure \ref{fig:novcsize2}, Z3 declares that satisfying all the constraints is impossible. (The set of constraints for this diagram are available at {\tt http://www.cse.msu.edu/\~ \ }$\!\!${\tt sandeep/NoVCsize2/}) In other words, it is impossible to assign timestamps of size 2 for the communication diagram in Figure \ref{fig:novcsize2}. Thus, the above theorem follows.
\end{proof}

Subsequent to finding the example communication diagram using Z3, we also carried out a manual proof
that vector length of 2 is insufficient.

\begin{figure}[tbhp]
\centering
{\includegraphics[height = 2in]{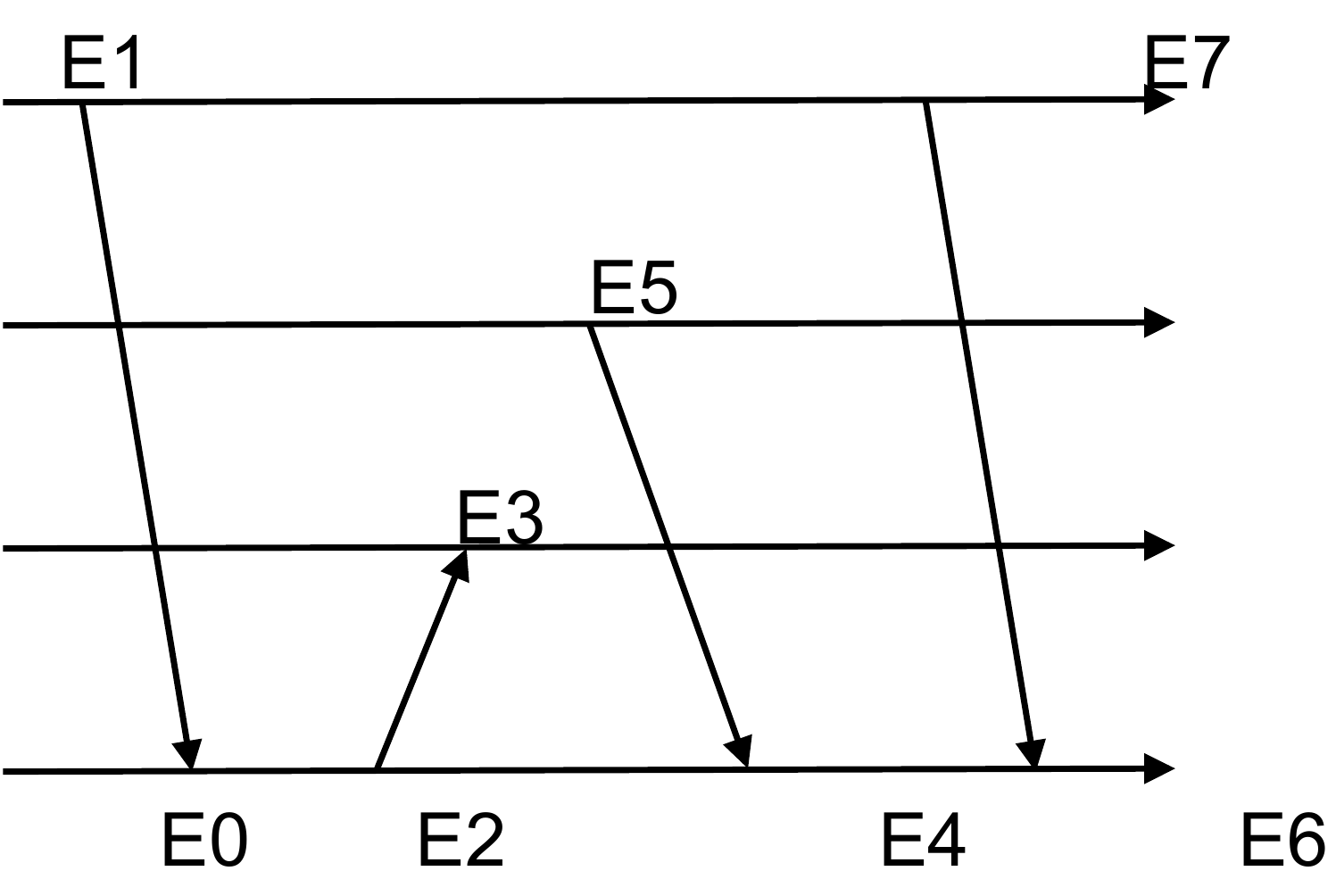}}
\caption{Vector length of 2 is insufficient}
\label{fig:novcsize2}
\end{figure}

\section{Related Work in \cite{DBLP:conf/icdcs/GargS02,DBLP:journals/dc/GargSM07,DBLP:journals/tocs/LadinLSG92,zawirski2015swiftcloud}}
\label{a:related}

Related work is discussed in Section \ref{sec:related}. In this section, we expand
on the discussion of the work in \cite{DBLP:conf/icdcs/GargS02,DBLP:journals/dc/GargSM07,DBLP:journals/tocs/LadinLSG92,zawirski2015swiftcloud}, which is most relevant to this paper. In particular, our inline timestamps have close
similarities to comparable objects in these prior papers.

\paragraph{Synchronous messages \cite{DBLP:conf/icdcs/GargS02,DBLP:journals/dc/GargSM07}:}
For synchronous messages, the sender process, after sending a message,
must {\em wait} until it receives an acknowledgement from
the receiver process, as illustrated in Figure \ref{fig:synchronous}.
This constraint is exploited in \cite{DBLP:conf/icdcs/GargS02,DBLP:journals/dc/GargSM07} to design small timestamps.
In particular, if the communication network formed by the processes
is decomposed into, say, $d$ components that
are either {\em triangles} or {\em stars}, then the timestamps contain $d+4$ integer elements.
Due to the synchronous nature of communication, messages within each component are totally ordered. The timestamps in \cite{DBLP:conf/icdcs/GargS02,DBLP:journals/dc/GargSM07} exploit this total ordering, such that the $j$-th element of a vector included in the timestamp for an event represents the number of messages within the $j$-th component (of the
decomposition) that happened before the given event.  

\begin{figure}[ht]
\centering
\centerline{\includegraphics[height = 1.6in]{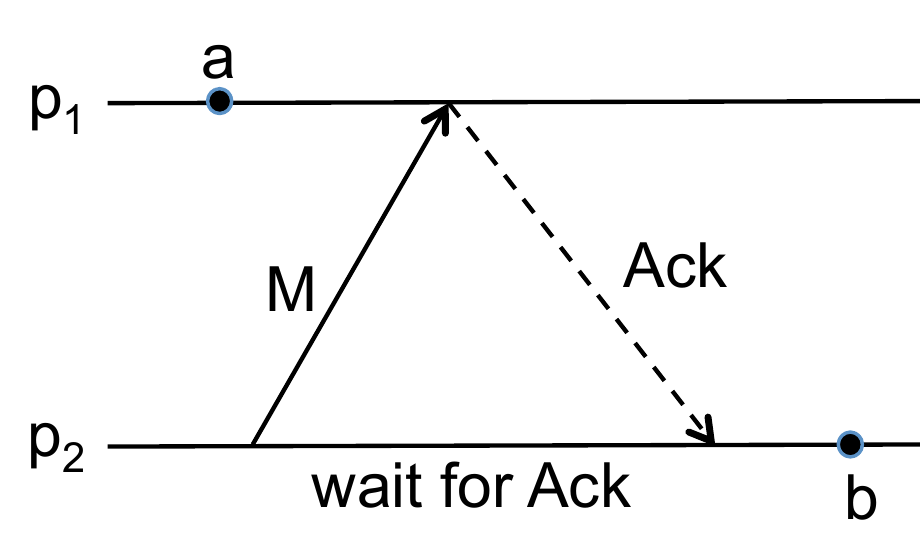}}
\caption{Synchronous message}
\label{fig:synchronous}
\end{figure}

Our timestamping algorithm does {\em not} constrain the messages to be synchronous.
Our approach has some similarities to \cite{DBLP:conf/icdcs/GargS02,DBLP:journals/dc/GargSM07} and also some
key differences.
In our case, the timestamp contains $2c+2$ integer elements, where $c$ is the size of the
vertex cover of the communication network formed by the processes.
Thus, the timestamps contain more elements
because we allow the flexibility of using asynchronous messages.
A consequence of allowing asynchronous messages is that the $next$ field of our inline timestamp may need
to be modified up to $c$ times as the execution progresses, where $c=|\CC|$
is the size of the chosen vertex cover of the communication graph.
When vertex cover size is $c$, the network can be decomposed into $c$ stars. However, our algorithm does not utilize the decomposition as such (but instead uses the knowledge of a cover set $\CC$). On the other hand, the algorithm in \cite{DBLP:conf/icdcs/GargS02,DBLP:journals/dc/GargSM07} explicitly uses the decomposition into triangle and stars. 

The $next$ field in our timestamp for events outside $\CC$ includes an index for
the receive event of one message sent to each of the $c$ processes in the vertex cover.
Thus, the $next$ field includes $c$ elements.
On the other hand,
the timestamps in \cite{DBLP:conf/icdcs/GargS02,DBLP:journals/dc/GargSM07} include
just 1 index that has functionality analogous to one of the $c$ elements in our
$next$ field. This index in the timestamp in \cite{DBLP:conf/icdcs/GargS02,DBLP:journals/dc/GargSM07} counts messages in a component of the edge decomposition,
whereas in our case, the index counts number of events at a process.
These distinctions are caused by the restriction of synchronous messages
in \cite{DBLP:conf/icdcs/GargS02,DBLP:journals/dc/GargSM07}, and allowance for
asynchronous messages in our scheme.

\paragraph{Causal memory \cite{DBLP:journals/tocs/LadinLSG92,zawirski2015swiftcloud}:}
The purpose of the timestamps used in the work on causal memory is to ensure causal consistency.
There are close similarities between our timestamps and comparable objects maintained in some causal memory
schemes \cite{DBLP:journals/tocs/LadinLSG92,zawirski2015swiftcloud}.

In {\em Lazy Replication} \cite{DBLP:journals/tocs/LadinLSG92},
a client-server architecture is used to implement causally consistent shared memory.
Each server maintains a copy of the shared memory.
Each client sends its updates and queries to one of the servers. A server that receives
an update from one of the clients then propagates the update to the other servers.
Each server maintains a vector clock: the $i$-th entry
of the vector at the $j$-server essentially counts the number of updates propagated to the $j$-th server
by the $i$-th server. Each client also maintains a similar vector:
the $i$-th element of the client's vector counts the number of updates propagated by the $i$-th server
on which the client's state depends. Additionally, when a client sends its update to the $j$-th server,
the $j$-th element of the client's vector is updated to the index of the client's update at the $j$-th server. 
The client may potentially send the same update to multiple servers, say, $j$-th and $k$-th servers; in this case,
the $j$-th and $k$-th elements of the client's vector will be updated to the indices
of the client's updates at the respective servers.
Finally, a server cannot process an update or a query from a client until the server's vector 
clock is $\geq$ the vector at the client. The $\geq$ operator here performs an element wise
comparison of the vector elements: vector $v\geq w$ only if each element of $v$ is $\geq$ the
corresponding element of vector $w$. 
This comparison operation differs slightly from the way we compare analogous elements in our timestamps
in partial order $>$ for inline timestamps, as defined towards the end of Section \ref{sec:inline}.
In particular, recall condition (iv) of the partial order $<$ from Section \ref{sec:inline}.
In condition (iv), it suffices to satisfy inequality for {\em just one} element of the  $next$ array. This is in contrast
to vector comparison used in Lazy Replication.

The mechanism used in {\em SwiftCloud} \cite{zawirski2015swiftcloud} is motivated by {\em Lazy Replication} \cite{DBLP:journals/tocs/LadinLSG92},
and has close similarities to the vectors in \cite{DBLP:journals/tocs/LadinLSG92}.
In {\em SwiftCloud}, if a client sends its update to multiple servers, then the indices
returned by the servers are {\em merged} into the dependency vector maintained by the client
(optionally, some of the returned indices may not be merged).
Importantly, a server can only respond to future requests from the client provided that the server's vector
covers the client's dependency vector.


Beyond some small differences in how the timestamp comparison is performed, the other difference
between shared memory schemes above and our solution is that the above schemes rely on a set
of {\em servers} through which the processes interact with each other. Thus, the
communication network in their case is equivalent to a clique of servers to which the clients
are connected.  The size of the timestamps
is a function of the number of servers. On the other hand, we allow arbitrary communication networks, with the size of the timestamps being a function of a {\em vertex cover} for the communication
network. The vertex cover is not necessarily completely connected.
Secondly, dependencies introduced through {\em events} happening at the servers (e.g.,
receipt of an update from a client) in the shared memory systems are not
necessarily {\em true} dependencies. For instance, suppose that process $p_0$ propagates
update to variable $x$ to replica R, then process $p_1$ propagates update
to variable $y$ to replica R, and finally process $p_2$ reads updated value of $y$ from
replica R. In the shared memory dependency tracking schemes above, the update of $x$ by $p_0$
would be treated as having happened-before the read by $p_2$. In reality, there is no such
causal dependency. But the dependency is introduced artificially as a cost of reducing the
timestamp size. On the other hand, in the message-passing context, if the communication
network reflects the communication channels used by the processes, then no such artificial
dependencies will arise. However, in the message-passing case as well, we can introduce
artificial dependencies by disallowing the use of certain communication channels in order to
decrease the vertex cover size. This was illustrated in Section \ref{sec:related} through the example
in Figure \ref{fig:kv}.

\section{Implementation Issues}
\label{sec:app}

In the inline algorithm, recall that elements of the $next$ field of the timestamps
of events at processes outside $\CC$ may have to be modified as many as $c$ times, where
$c$ is the size of the vertex cover chosen by the algorithm.

In particular, when a process $p_i\not\in\CC$ sends a message to some process $p_j\in\CC$
at some time $t$, the $j$-th element of $next$ field in $\tau(e)$ as well as 
$next[j]$ element for any prior event $f$ for which $\tau(f).next[j]=\infty$, is modified
to equal to the index of the receive event at $p_j$ corresponding to the message sent at $e$.
Before the modification can be made, there is a delay due to the wait for a control message
from $p_j$ that will inform $p_i$ of this index.
Any queries at time $\geq t$ for the timestamps of event $e$, and events such as $f$, should
not return until the index is known. To implement this, when the message is sent at $e$,
the $next[j]$ element of $e$ and $f$ can be set to $\perp$ to indicate an invalid value -- when
such an invalid value is found in $next$ field for an event, the query procedure will know
that it must wait for the invalid value to be updated before the event's timestamp can be returned.

Secondly, consider the set of events, $E_m^{i,j}$, at a process $p_i\not\in\CC$ that occur between
the $m$-th and $m+1$-th messages sent by $p_i$ to a process $p_j\in\CC$.
Observe that for all the events in $E_m^{i,j}$ the $next[j]$ element of their timestamps
is identical. This fact can be exploited by process $p_i$ to make it easier to update the
timestamps stored at $p_i$. In particular, for all the events in $E_m^{i,j}$, the $next[i]$
element of the timestamp can point to an identical memory location -- modifying this location then
modifies the $j$-th element of all these timestamps simultaneously.

Two other improvements can be made to the $next$ component of the timestamp:
\begin{itemize}
\item {\em Reducing the size of the $next$ field}: In our discussion so far
we assume
that the $next$ field of the timestamp of an event outside $\CC$ includes one element per process in $\CC$.
However, it suffices for the $next$ field for timestamps at process $p_i\not\in\CC$
to include an element for each neighbor of $p_i$ in $\CC$.
Since process $p_i\not\in\CC$ never sends a message directly to any process $p_j\in\CC$
such that $(p_i,p_j)\not\in\scripte$, the elements of $next$ corresponding to such $p_j$
in timestamps for events at $p_i$ will always remain $\infty$. Hence these elements can be
safely removed from the $next$ field.
Thus reduces the size of the $next$ field for events at process $p_i\not\in\CC$ to
the number of its neighbor processes (which are necessarily all in $\CC$).

\item {\em Reducing the delay in computing the $next$ field elements:}
In the basic algorithm presented in Section \ref{sec:inline},
for an event $e$ at process $p_i\in\CC$, the $j$-th element of the $next$
field cannot be computed (where $(p_i,p_j)\in\scripte$) until a message
from $p_i$ (sent at $e$ or later) is received by $p_j$.
The essential use of the $next[j]$ field is to learn the index of the
earliest event at $p_j$ that is ``directly'' influenced by event $e$ (``directly''
here means due to a message from $p_i$ to $p_j$).

Now we suggest a potential alternative, illustrated in Figure \ref{fig:improve}.
In Figure \ref{fig:improve}(a), both the elements of the $next$ field of event $e$ at $p_3$
are $\infty$. As shown in Figure \ref{fig:improve}(c), although $p_3$ does not send a message
to $p_1$ at or after event $f$, event $f$ does influence event $g$ at process $p_1$.
In this case, it would be acceptable if $next[1]$ element of 
$\tau(e)$ and $\tau(f)$ is set equal to 2 (because $g$ is the second event at $p_1$).
However, for $p_3$ to be able to learn this event index, additional control information will have to be
exchanged between the processes. The benefit of the optimization is that the $next$ elements are
changed from $\infty$ to finite values earlier than the basic approach illustrated earlier,
but potentially at the cost of greater control overhead.
A detailed design of this solution is not yet developed.


\begin{figure}[p]
\centering
\centerline{\includegraphics[height = 2.4in]{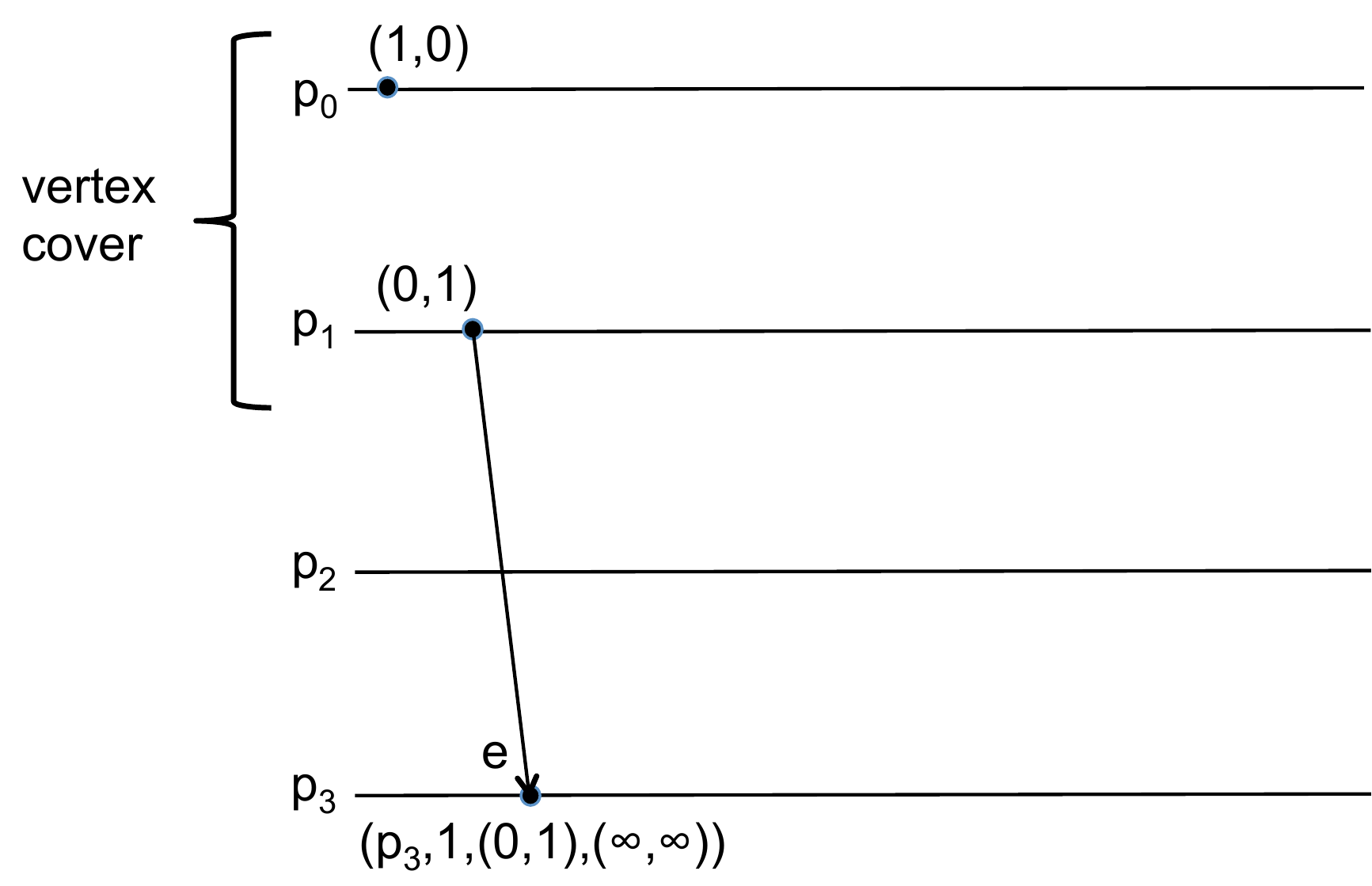}}

\centerline{(a)}

~

~

\centerline{\includegraphics[height = 2.4in]{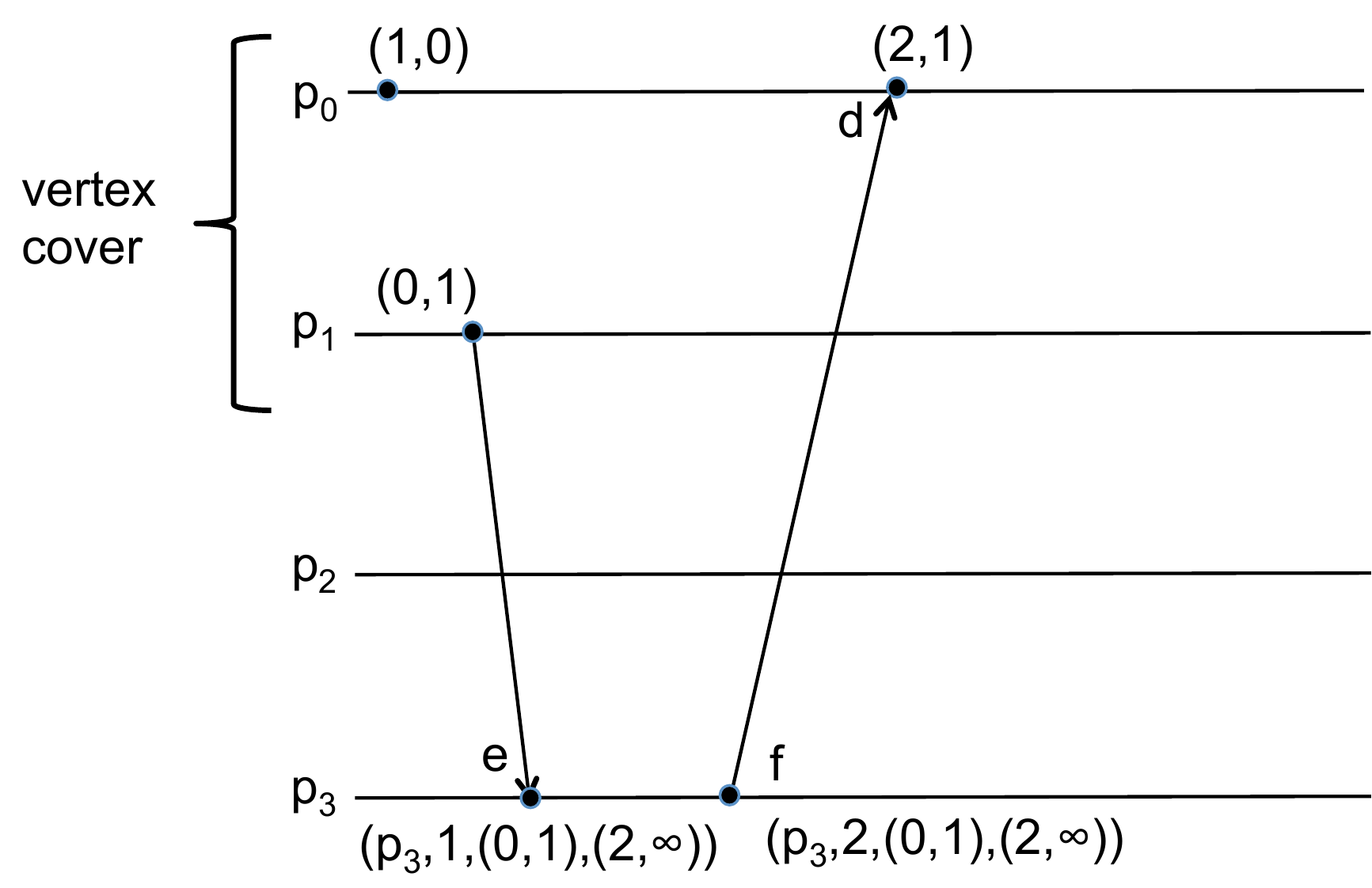}}

\centerline{(b)}

~

~

\centerline{\includegraphics[height = 2.4in]{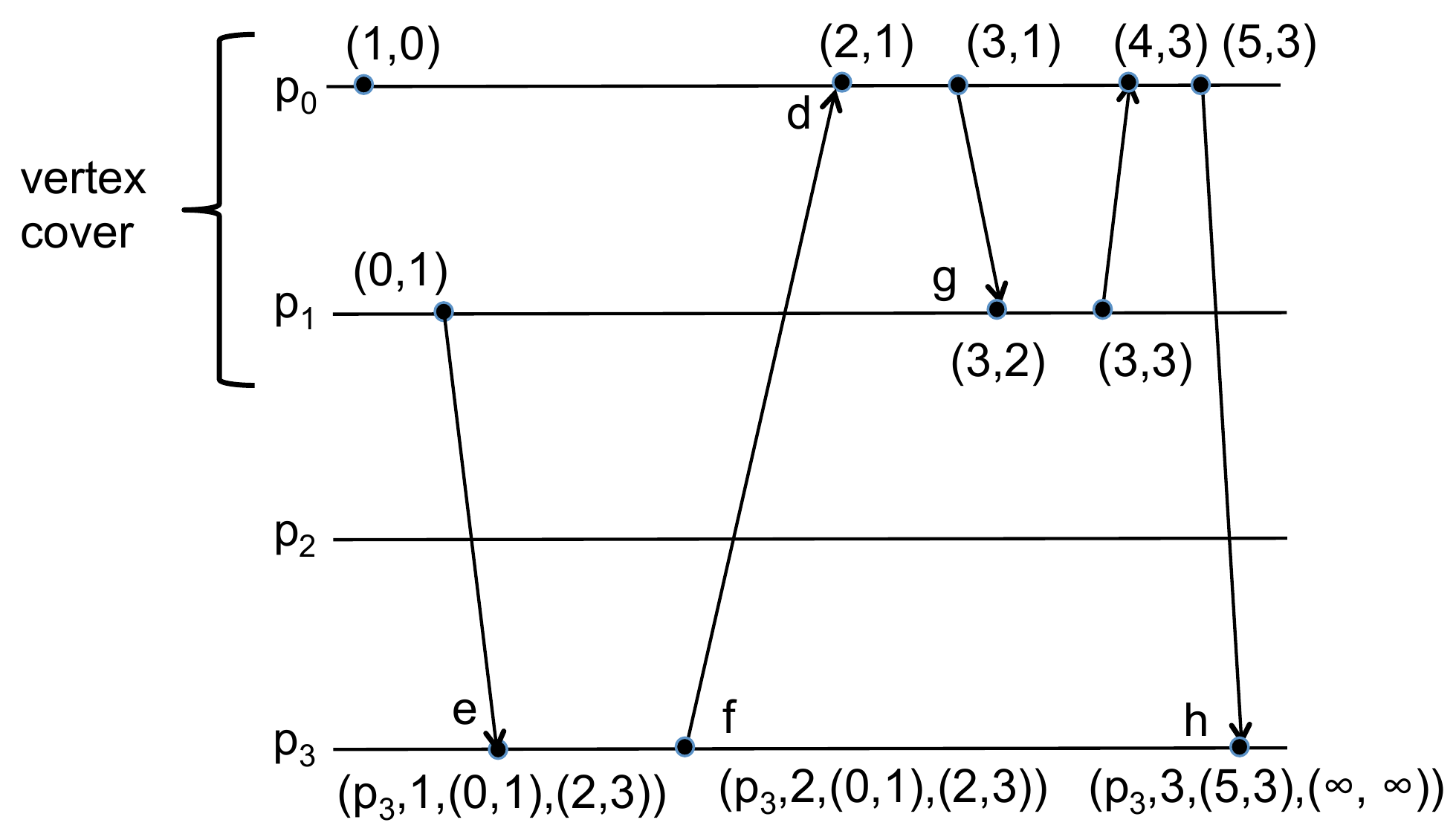}}

\centerline{(c)}

\caption{Improvement in $next$ computation}
\label{fig:improve}
\end{figure}

\end{itemize}

\end{document}